\newtheorem{theorem}{\bf Theorem}
\newtheorem{lemma}{\bf Lemma}
\newtheorem{assumption}{\bf Assumption}
\newcounter{step}
\newlength{\totlinewidth}
\newenvironment{algorithm}{%
  \rule{\linewidth}{1pt}
  \begin{list}{}%
    {\usecounter{step}%
      \settowidth{\labelwidth}{\textbf{Step 2:}}%
      \setlength{\leftmargin}{\labelwidth}%
      \setlength{\topsep}{-2pt}%
      \addtolength{\leftmargin}{\labelsep}%
      \addtolength{\leftmargin}{2mm}%
      \setlength{\rightmargin}{2mm}%
      \setlength{\totlinewidth}{\linewidth}%
      \addtolength{\totlinewidth}{\leftmargin}%
      \addtolength{\totlinewidth}{\rightmargin}%
      \setlength{\parsep}{0mm}%
      \raggedright}}%
  {\end{list}%
  \rule{\linewidth}{1pt}}
\newcounter{substep}
\newlength{\aligntop}
\newlength{\alignbot}
\newcommand\semihuge{\@setfontsize\semihuge{19.3}{25}}
\newcommand\semismall{\@setfontsize\semihuge{12.4}{15}}
\begin{document}

\title{\huge Complex-Valued Neural Network based Federated Learning for Multi-user Indoor Positioning Performance Optimization\vspace*{0.2em}}

\author{\large{Hanzhi Yu, Yuchen Liu, \textit{Member IEEE}, and Mingzhe Chen, \textit{Member IEEE} \vspace*{1em}\\ 
}\vspace*{-2em}

\thanks{Hanzhi Yu and Mingzhe Chen are with the Department of Electrical and Computer Engineering and Frost Institute for Data Science and Computing, University of Miami, Coral Gables, FL 33146 USA (Emails: \protect\url{{hanzhiyu, mingzhe.chen}@miami.edu)}.} 
\thanks{Yuchen Liu is with the Department of Computer Science, North Carolina State University, Raleigh, NC 27695 USA (Email: \protect\url{yuchen.liu@ncsu.edu}).}
\thanks{This work was supported by the U.S. National Science Foundation under Grants CNS-2312139 and CNS-2312138. }
}
\maketitle
%
\begin{abstract}
In this article, the use of channel state information (CSI) for indoor positioning is studied. In the considered model, a server equipped with several antennas sends pilot signals to users, while each user uses the received pilot signals to estimate channel states for user positioning. To this end, we formulate the positioning problem as an optimization problem aiming to minimize the gap between the estimated positions and the ground truth positions of users. To solve this problem, we design a complex-valued neural network (CVNN) model based federated learning (FL) algorithm. Compared to standard real-valued centralized machine learning (ML) methods, our proposed algorithm has two main advantages. First, our proposed algorithm can directly process complex-valued CSI data without data transformation. Second, our proposed algorithm is a distributed ML method that does not require users to send their CSI data to the server. Since the output of our proposed algorithm is complex-valued which consists of the real and imaginary parts, we study the use of the CVNN to implement two learning tasks. First, the proposed algorithm directly outputs the estimated positions of a user. Here, the real and imaginary parts of an output neuron represent the 2D coordinates of the user. Second, the proposed method can output two CSI features (i.e., line-of-sight/non-line-of-sight transmission link classification and time of arrival (TOA) prediction) which can be used in traditional positioning algorithms. Simulation results demonstrate that our designed CVNN based FL can reduce the mean positioning error between the estimated position and the actual position by up to 36\%, compared to a RVNN based FL which requires to transform CSI data into real-valued data.
\end{abstract}

\begin{IEEEkeywords}
 Indoor positioning, complex-valued CSI, complex-valued neural network, federated learning.
\end{IEEEkeywords}

\section{Introduction}
Device positioning plays an important role for many emergent applications, such as virtual reality, autonomous vehicles, and shared mobility (e.g., e-scooter rental on Uber) \cite{7762095}. In particular, global navigation satellite system (GNSS) based localization methods particularly global positioning system (GPS) based methods are widely used for these emergent applications. However, GNSS based methods may not be applied for indoor positioning since the signals that are transmitted from satellites to a target user and used for positioning have a higher probability of being blocked by obstacles such as walls, furniture, and human bodies compared to the use of GNSS based methods for outdoor positioning \cite{8451859}. To address this challenge, radio frequency (i.e., WiFi and visible light) based indoor positioning methods is a promising technology due to their ability to capture signal variances in complex indoor environment \cite{8057286,zhu2024survey}. However, the use of radio frequency for indoor positioning still faces several challenges. First, the accuracy of radio frequency based methods depend on line-of-sight (LOS) pilot signal transmission. Non-line-of-sight (NLOS) pilot signal transmission may have high attenuation and signal scattering thus reducing positioning accuracy. Second, radio frequency based positioning may suffer from interference caused by devices that use the same frequency for data transformation \cite{8692423}. To overcome these challenges, one can study the use of fingerprinting-based positioning methods, \cite{9149443, 7438932, 9129126, 8027020, 9535306, 8919897, 9999279, 8761305, 9066152, 9917443, 10214616, 9593115, 10118848, 9838945, 10005038, 9148111} to estimate positions of a user. \\

Recently, a number of existing works \cite{9149443, 7438932, 9129126, 8027020, 9535306, 8919897, 9999279, 8761305} have studied the use of radio frequency and machine learning (ML) tools \cite{9562559} for indoor positioning. In particular, the authors in \cite{9149443} introduced a k-nearest neighbor based positioning method which uses the magnitude component of channel state information (CSI) to estimate the position of a user. In \cite{7438932}, the authors developed a Bolzmann machine based positioning scheme that uses CSI signal amplitudes to estimate the position of a user. The authors in \cite{9129126} designed a convolutional neural network (CNN) based positioning method that uses CSI signals in polar domain to estimate the position of a user. In \cite{8027020}, the authors proposed a CNN based positioning method and considered CSI amplitudes from three antennas as an image. In \cite{9535306}, the authors trained different neural network models by real-valued CSI features which are extracted from the CSI data obtained by different access points, to estimate a probability distribution of the user at given locations as the output. In \cite{8919897}, the authors designed a Siamese neural network based framework for supervised, semisupervised positioning as well as unsupervised channel charting. In \cite{9999279}, the author introduced a method that uses an unsupervised deep autoencoder based model to extract CSI features from CSI data, and uses the extracted features to estimate the position of the user. In \cite{8761305}, the authors presented an outdoor positioning method which first clusters CSI and received signal strength indicator (RSSI) samples into a group using a $K$ nearest neighbors algorithm, and then estimate the position of the user using a deep neural network trained by the samples in the same group. However, most of these existing works \cite{9149443, 7438932, 9129126, 8027020, 9535306, 8919897, 9999279, 8761305} need to transform complex-valued CSI data into real-valued data so as to feed into real-valued ML models by: 1) separating the real and imaginary part, 2) using the power of the real and imaginary parts, 3) converting the complex-valued CSI into polar domain values. These transformation methods may lose the features in original complex-valued CSI data thus decreasing accuracy of the positioning algorithms. Moreover, all these works \cite{9149443, 7438932, 9129126, 8027020, 9535306, 8919897, 9999279, 8761305} require users to send their collected CSI data to a server, which may not be practical since not all users are willing to share their personal data due to privacy and security issues. \\

To avoid users transmitting original CSI data for user positioning, a number of existing works \cite{9066152, 9917443, 10214616, 9593115, 10118848, 9838945, 10005038, 9148111} have studied the use of federated learning (FL) for user localization. In particular, the authors in \cite{9066152} designed an FL based fingerprinting method which uses received signal strength indicator (RSSI) data to train a neural network model that consists of an autoencoder and a deep neural network for user positioning. In \cite{9917443}, the authors introduced a convolutional neural network (CNN) based FL which uses received signal strength (RSS) data to solve the problem of building-floor classification and latitude-longitude regression in indoor localization. In \cite{10214616}, the authors designed a personalized FL which trains a reinforcement learning model at each user device using RSS data for indoor positioning. In \cite{9593115}, the authors designed a federated attentive message passing method which trains a personalized local model for each user via its non-independent and identically distributed (non-IID) RSS data to estimate its position. In \cite{10118848}, the authors designed a hierarchical multilayer perceptron (MLP) based FL positioning algorithm which uses the RSSI as input and outputs the estimation of the building, the floor, and the 2D location where the user is located. In \cite{9838945}, the authors proposed a FL based CSI fingerprinting method, which trains two local CNNs of each access point to extract features separately from the amplitude and the phase difference of the CSI, and uses a global fully connected estimator to estimate the location of the user. In \cite{10005038}, the authors introduced a FL framework for indoor positioning, which uses CSI amplitude as input, and the posterior probability of the user at each reference position as the output. In \cite{9148111}, the authors designed a MLP based FL algorithm which uses RSS fingerprints as input to estimate the coordinate of the user. However, all methods in \cite{9066152, 9917443, 10214616, 9593115, 10118848, 9838945, 10005038, 9148111} are real-valued ML algorithms which need to be trained by real-valued data. Hence, the methods in \cite{9838945, 10005038} may not be able to extract all features from the original complex-valued CSI data. To this end, it is necessary to design a complex-valued neural network (CVNN) based FL which process complex-valued CSI without data transformation. However, designing CVNN based FL faces several unique challenges. First, when training a CVNN, we need to calculate the complex-valued gradients and update complex-valued weights. Hence, it is necessary to design novel methods to calculate gradients and update complex-valued weights \cite{9413814}. Second, real-valued activation functions designed for real-valued neural networks (RVNNs) cannot be directly used for CVNNs due to the conflict between the boundedness and the analyticity of complex functions. Hence, a proper activation function in the complex plane must be designed for CVNNs \cite{benvenuto1992complex}. Third, since the output of a CVNN is complex-valued, one can output two values (i.e., real and imaginary values). Therefore, it is necessary to design the output of the CVNN so as to obtain better user positioning performance. \\

The main contribution of this work is to design a novel indoor positioning framework that uses complex-valued CSI data to estimate the position of users. Our key contributions include: 
\begin{itemize}
    \item We consider an indoor positioning system where the server equipped with several antennas sends pilot signals to users. Each user receives pilot signals to estimate the channel states. The estimated CSI is used to predict the position of the user. The goal of our designed system is to train a ML model which uses a set of CSI data to predict the position of users. To this end, we formulate an optimization problem aiming to minimize the mean square error between predicted positions and ground truth positions of users. 
    \item To solve the formulated problem, we proposed a novel CVNN model based FL algorithm. Compared to traditional real-valued centralized ML methods  \cite{9149443, 7438932, 9129126, 8027020, 9535306, 8919897, 9999279, 8761305}, our designed method has two key advantages. First, our designed method is a distributed ML method which enables the users to train their CVNN models without local CSI data and position information sharing. Second, an FL model at each device is a CVNN which can process original complex-valued CSI data without any data transformation. Hence, compared to RVNN based positioning methods which must transform complex-valued CSI data into real-valued data, our proposed algorithm can extract more CSI features thus improving positioning accuracy. Furthermore, to reduce the communication overhead of FL model parameter transmission, we design a novel FL parameter model transmission scheme which enables each device to transmit only real or imaginary parts of the FL models to the server. 
    \item We propose to use our designed CVNN model based FL algorithm for two use cases: 1) the output of our designed algorithm is the estimated position of the user. Here, the real and imaginary parts of the output neuron separately represents the x-coordinate and y-coordinate of the user; 2) the output of our designed algorithm extracts two CSI features such as time of arrival (TOA) and LOS/NLOS transmission link classification that can be used for traditional positioning algorithms. Here, the real and imaginary parts of the output neuron can represent two CSI features. This is a major differences between CVNN and RVNN since one real-valued neuron in a RVNN model can represent only one CSI feature. 
    \item We derive a closed-form expression for the expected convergence rate of our designed CVNN based FL algorithm and build an explicit relationship between the probability of each user transmitting the real or imaginary part of the model parameters to the server and the performance of the FL algorithm. \\
\end{itemize}

Simulation results show that our proposed CVNN based FL positioning method can reduce the mean positioning error between the estimated position and the actual position by up to 36\% compared to a RVNN model based positioning algorithm. \\

The remainder of this paper is organized as follows. The system model and problem formulation are introduced in Section \uppercase\expandafter{\romannumeral2}. The design of the CVNN based FL model will be introduced in Section \uppercase\expandafter{\romannumeral3}. The expected convergence rate of our designed CVNN based FL is studied in Section \uppercase\expandafter{\romannumeral4}, simulation results are presented and discussed in Section \uppercase\expandafter{\romannumeral5}. Finally, conclusions are drawn in Section \uppercase\expandafter{\romannumeral6}. 

\section{System Model and Problem Formulation}\label{se:system}
%


%
%
Consider a positioning system in which a server uses CSI to estimate positions of a set $\mathcal{U}$ of $U$ users, as shown in Fig. \ref{fig:environment}. The server equipped with $C$ antennas sends pilot signals to users over $L$ subcarriers. Each user receives pilot signals to estimate the channel states. By using the estimated CSI, the position of the user can be predicted. In the considered scenario, several obstacles exist and hence may block the transmission of pilot signals between the users and the server. Hence, the transmission link between a user and the server may be NLOS. Next, we first introduce the process of CSI collection. Then, we introduce our considered positioning problem. Table \ref{tab:notation} provides a summary of the notations used hereinafter. 
\begin{figure}[!t]
 \begin{center}
    \includegraphics[width=7cm]{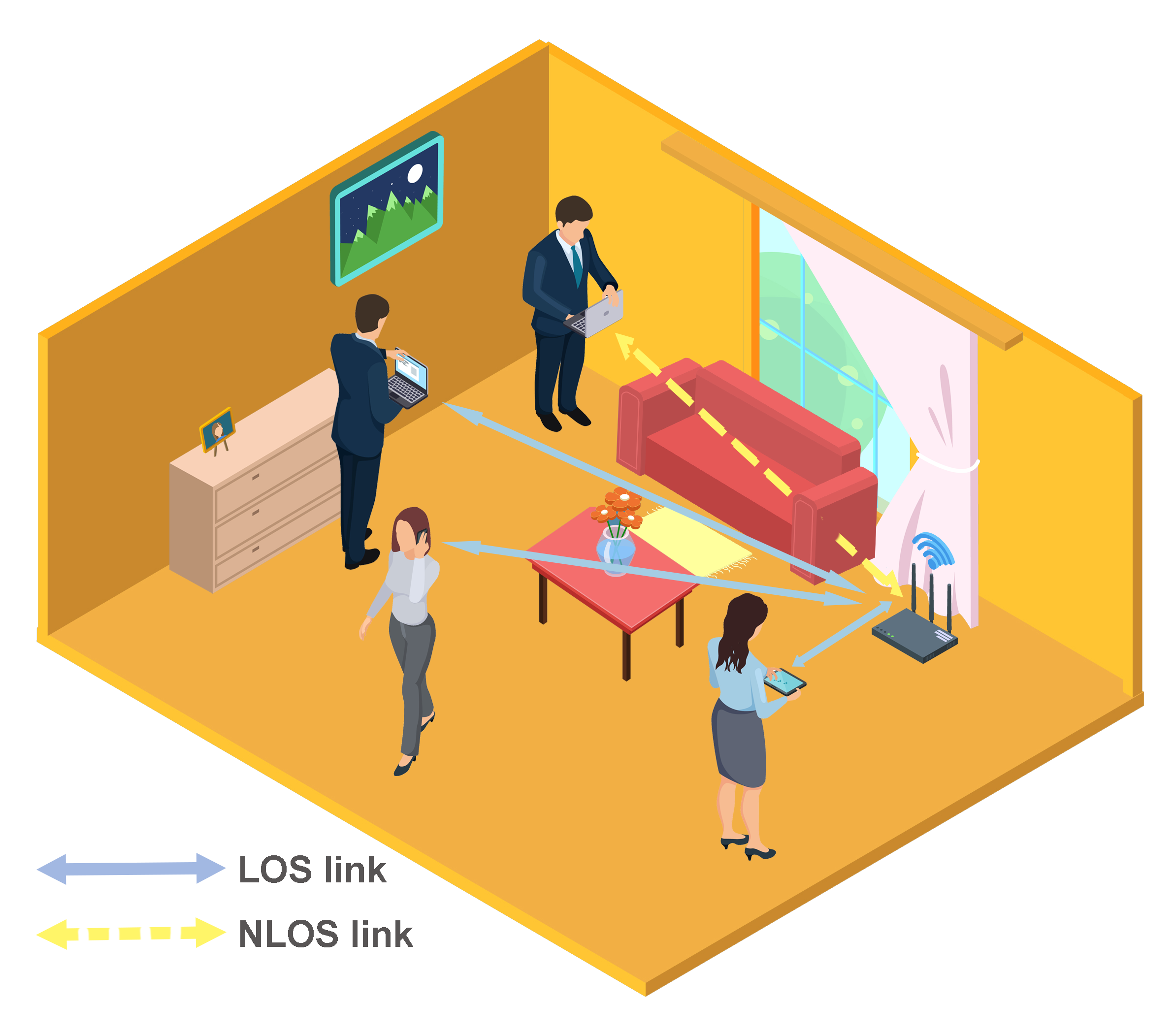}
    \caption{The considered indoor positioning system. }
    \label{fig:environment} 
    \vspace{-.5cm}
  \end{center}
\end{figure}

\begin{table*}
\caption{List of Notations}
\label{tab:notation}
\centering
\begin{tabular}{|c|c|c|c|}
\hline
\textbf{Notation} & \textbf{Description} & \textbf{Notation} & \textbf{Description}\\
\hline

$C$ & Number of antennas equipped on the server & $U$ & Number of users \\
\hline
$\boldsymbol{w}_u$ & ML model parameters of user $u$ & $\mathcal{U}$ & Set of users \\
\hline
$\boldsymbol{Y}^{\textrm{I}}, \boldsymbol{Y}^{\textrm{II}}$ & Output of convolutional layer \textrm{I} and \textrm{II} & $L$ & Number of subcarriers\\
\hline
$\boldsymbol{W}^{\textrm{I}}, \boldsymbol{W}^{\textrm{II}}$ & Convolutional kernels of convolutional layers \textrm{I} and \textrm{II} & $\boldsymbol{H}$ & The CSI matrix \\
\hline
$\boldsymbol{b}^{\textrm{I}}, \boldsymbol{b}^{\textrm{II}}$ & Bias vectors of convolutional layers \textrm{I} and \textrm{II} & $\boldsymbol{h}_j$ & Column or row $j$ of $\boldsymbol{H}$ \\
\hline
$S_1 \times S_1$ & Kernel size of convolutional layer \textrm{I} & $\hat{\boldsymbol{H}}$ & Normalized $\boldsymbol{H}$ \\
\hline
$S_2 \times S_2$ & Kernel size of convolutional layer \textrm{II} & $\mathcal{D}_u$ & Local dataset of user $u$ \\
\hline
$O^{\textrm{I}}$ & Number of the output channel of convolutional layer \textrm{I} & $\left( \boldsymbol{H}_{u,k}, \boldsymbol{p}_{u,k} \right)$ & Data sample $k$ in $\mathcal{D}_u$ \\
\hline
$O^{\textrm{II}}$ & Number of the output channel of convolutional layer \textrm{II} & $\overline{\boldsymbol{Y}}^{\textrm{I}}, \overline{\boldsymbol{Y}}^{\textrm{II}}$ & Outputs of pooling layers \\
\hline
$S^{\textrm{I}}, S^{\textrm{II}}$ & Stride size of pooling layer \textrm{I} and \textrm{II} & $\boldsymbol{y}^{\textrm{III}}$ & Output of the flatten layer \\
\hline
$P^{\textrm{I}}, P^{\textrm{II}}$ & Size of the pooling window of pooling layer \textrm{I} and \textrm{II} & $\boldsymbol{y}', \boldsymbol{y}''$ & Outputs of fully connected layers \\
\hline
$\boldsymbol{W}', \boldsymbol{W}''$ & Weight matrices of fully connected layers & $\hat{y}$ & Output of the CVNN model \\
\hline
$\boldsymbol{b}', \boldsymbol{b}''$ & Bias vectors of fully connected layers & $\boldsymbol{g}$ & Global FL model \\
\hline
$N^{\textrm{I}}, N^{\textrm{II}}$ & Number of neurons of fully connected layers & $T$ & Number of FL training iterations \\
\hline
$\boldsymbol{w}$ & Weight matrix of the CVNN output layer & $b$ & Bias of the CVNN output layer \\
\hline
$\overline{\boldsymbol{W}}$ & All parameters of the CVNN model & $\overline{\boldsymbol{W}}_{u}^{t}$ & Local FL model of user $u$ \\
\hline
$r_u^t$ & \makecell[c]{A variable indicates whether user $u$ transmits \\ the real part of $\overline{\boldsymbol{W}}_u^t$ to the server} & $T$ & Number of FL training iterations \\
\hline
$m_u^t$ & \makecell[c]{A variable indicates whether user $u$ transmits \\ the imaginary part of $\overline{\boldsymbol{W}}_u^t$ to the server}& $\mathcal{B}_{u}^{t}$ & \makecell[c]{The training batch of \\ user $u$ at iteration $t$} \\
\hline
\end{tabular}
\end{table*}

\subsection{CSI Data Collection}
We assume that the server and the user communicate on a narrowband flat-fading channel. Let $\boldsymbol{x} \in \mathbb{C}^{L \times 1}$ be the symbol vector transmitted from the server to a user. Then, the signal received by a user is  
\begin{equation}\label{eq:Y}
    \boldsymbol{y} = \boldsymbol{H} \boldsymbol{P} \boldsymbol{x} + \boldsymbol{n},  
\end{equation} 
where $\boldsymbol{P} = \left[ \boldsymbol{p}_1,...,\boldsymbol{p}_L \right]$ is the beamforming matrix with $\boldsymbol{p}_l \in \mathbb{C}^{C \times 1}$ being the beamforming vector at subcarrier $l$, $\boldsymbol{n} \in \mathbb{C}^{L \times 1}$ is the additive white Gaussian noise, and $\boldsymbol{H} \in \mathbb{C}^{L \times C}$ is the CSI matrix of the transmission link between the server and the user over all subcarriers. The CSI matrix $\boldsymbol{H}$ received by a user over $L$ subcarriers is 
\begin{equation}\label{eq:Hf}
    \boldsymbol{H} = \left[ \boldsymbol{h}_1, \boldsymbol{h}_2, ..., \boldsymbol{h}_L \right]^T, 
\end{equation}
where $\boldsymbol{h}_l \in \mathbb{C}^{C \times 1}$ is the channel vector at subcarrier $l$ over $C$ antennas \cite{9076084}. 

\subsection{Problem Formulation}
Given the defined model, next, we introduce our positioning problem. Our goal is to design a ML algorithm which uses the collected CSI to estimate the position of a set of users. We assume that user $u$ has a local dataset $\mathcal{D}_u$ that consists of $|\mathcal{D}_u|$ data samples. Each data sample $k$ of user $u$ consists of CSI matrix $\boldsymbol{H}_{u,k}$ and the user's position $\boldsymbol{p}_{u,k} = \left[ a_{u,k}, b_{u,k} \right]$, where $a_{u,k}$, $b_{u,k}$ are the coordinates of user $u$. Let $f \left( \boldsymbol{w}_u, \boldsymbol{H}_{u,k} \right)$ be the position estimated by the ML algorithm, where $\boldsymbol{w}_u$ is the ML model parameters of user $u$. Then, the positioning problem can be formulated as an optimization problem whose goal is to minimize the gap between actual positions and estimated positions of $U$ users, which can be expressed as 
\begin{equation}\label{eq:problem}
    \mathop{\min}_{\boldsymbol{w}_1,...,\boldsymbol{w}_U} \frac{1}{\sum_{u=1}^U |\mathcal{D}_u|} \sum_{u=1}^{U} \sum_{k=1}^{|\mathcal{D}_u|} {\lVert f(\boldsymbol{w}_u,\,\boldsymbol{H}_{u,k}) - \boldsymbol{p}_{u,k} \rVert}_{2}^{2}. 
\end{equation}
To solve problem (\ref{eq:problem}), several methods such as in \cite{9149443, 7438932, 9129126, 8027020, 9535306, 8919897, 9999279, 8761305, 9593115, 10118848, 9838945, 10005038, 9148111, 9066152, 9917443, 10214616} have already been proposed. However, these methods have two key limitations. First, these methods are designed based on RVNNs and hence these methods need to first transform the complex-valued CSI data to real-valued CSI data, by: 1) separating the real part and the imaginary part of the complex-valued data, 2) using absolute values of the complex-valued data, or 3) converting the complex-valued data into polar domain values. However, the transformation of complex-valued CSI data to real-valued CSI data may lose some features of the original complex-valued CSI data thus decreasing the prediction accuracy of ML models. To solve this problem, we proposed a novel CVNN, which can directly use the original complex-valued CSI data as the input of the model without the transformation of data from complex-valued to real-valued. Second, most of current positioning methods \cite{9149443, 7438932, 9129126, 8027020, 9535306, 8919897, 9999279, 8761305} require the users to transmit their CSI data and the corresponding ground truth positions to the server for ML model training, which may not be practical since most of the users may not want to share their position information with the server due to privacy and security concerns. To address this issue, we propose to combine the designed CVNN model with FL which enables the server and a set of users to learn a common CVNN model cooperatively without requiring users to transmit their collected CSI data and positions to the server. 

\section{Proposed Complex-valued Neural Network Based FL}
In this section, we introduce the proposed CVNN based FL algorithm for solving problem (\ref{eq:problem}). 
Compared to traditional real-valued centralized ML methods \cite{9149443, 7438932, 9129126, 8027020, 9535306, 8919897, 9999279, 8761305}, our proposed method has two key advantages. First, our proposed method can directly process complex-valued CSI data without any data transformation from complex values to real values thus extracting more CSI features from CSI data and improving position prediction accuracy. Second, our designed positioning method is a distributed method which does not require users to transmit CSI information to the server during the model training process. Next, we first introduce the components of the proposed local FL model of each user. Then, we explain the training process of the designed algorithm. 

\subsection{Components of the Local FL Model}
Here, we introduce the components of the designed CVNN based local FL model, which consists of the following components: a) input layer, b) convolutional layer \textrm{I}, c) pooling layer \textrm{I}, d) convolutional layer \textrm{II}, e) pooling layer \textrm{II}, f) flatten layer, g) fully connected layer \textrm{I}, h) fully connected layer \textrm{II}, and i) output layer. These components are specified as follows: 
\begin{itemize}
    \item \textbf{Input layer}: To estimate the position of the user, the input of the designed CVNN based FL model at each device is the complex-valued CSI matrix $\boldsymbol{H}$. For each CSI sample, the normalization of complex-valued CSI matrix $\boldsymbol{H}$ is 
    \begin{equation}\label{eq:norm}
        \hat{\boldsymbol{h}}_j = \frac{\mathfrak{R}\left(\boldsymbol{h}_j\right)}{\max\left(\left|\boldsymbol{h}_j\right|\right)} + i \frac{\mathfrak{I}\left(\boldsymbol{h}_j\right)}{\max\left(\left|\boldsymbol{h}_j\right|\right)}, 
    \end{equation}
    where $\boldsymbol{h}_j$ is column or row $j$ of $\boldsymbol{H}$, $\hat{\boldsymbol{h}}_j$ is the corresponding column or row $j$ of the normalized CSI matrix $\hat{\boldsymbol{H}}$, $\mathfrak{R}\left( \boldsymbol{h}_j \right)$ is the real part of $\boldsymbol{h}_j$, and $\mathfrak{I}\left( \boldsymbol{h}_j \right)$ is the imaginary part of $\boldsymbol{h}_j$. If $\boldsymbol{h}_j$ is column $j$ of $\boldsymbol{H}$, we use (\ref{eq:norm}) to normalize the CSI matrix over each antenna. Otherwise, the normalization of the CSI sample $\boldsymbol{H}$ is over each CSI feature. The use of antenna normalization or feature normalization depends on the specific dataset.  The normalization method in (\ref{eq:norm}) is different from common normalization methods used in RVNNs since they cannot process complex numbers. 
    \item \textbf{Convolutional layer \textrm{I}}: We first use a 2D convolutional layer to extract CSI features. The relationship between the input $\hat{\boldsymbol{H}}$ and the output $\boldsymbol{Y}^{\textrm{I}} \in \mathbb{C}^{O^{\textrm{I}} \times C^{\textrm{I}} \times L^{\textrm{I}}}$ of this layer is 
    \begin{equation}\label{eq:conv1d}
        \boldsymbol{Y}_i^{\textrm{I}} = \phi_{1}\left( b_i^{\textrm{I}} + \boldsymbol{W}_{i,0}^{\textrm{I}} * \hat{\boldsymbol{H}} \right), 
    \end{equation}
    where $O^{\textrm{I}}$ is the number of the output channel, $\boldsymbol{Y}_{i}^{\textrm{I}} \in  \mathbb{C}^{C^{\textrm{I}} \times L^{\textrm{I}}}$ is matrix $i$ of the 3D matrix $\boldsymbol{Y}^{\textrm{I}}$ with $C^{\textrm{I}}$ being the height of each matrix $i$ and $L^{\textrm{I}}$ being the width, $\boldsymbol{W}_{i,0}^{\textrm{I}} \in \mathbb{C}^{S_1 \times S_1}$ is a weight parameters matrix of the convolutional kernel $\boldsymbol{W}^{\textrm{I}} \in \mathbb{C}^{O^{\textrm{I}} \times 1 \times S_1 \times S_1}$ with $S_1 \times S_1$ being the size of the convolutional kernel, $b_i^{\textrm{I}}$ is component $i$ of the bias vector $\boldsymbol{b}^{\textrm{I}} \in \mathbb{C}^{O^{\textrm{I}} \times 1}$, and $\phi_{1} \left( z \right)$ is a complex-valued activation function with respect to a complex number $z$. The activation function $\phi_{1} \left( z \right)$ is a variation of the ReLU function, which is defined as: 
    \begin{equation} \label{eq:phi1}
        \phi_{1}\left(z\right) = \max\left(0,\mathfrak{R}\left(z\right)\right) + i \max\left(0,\mathfrak{I}\left(z\right)\right). 
    \end{equation}
    From (\ref{eq:phi1}), we see that $\phi_{1}\left(z\right)$ is a complex-valued ReLU activation function that separately processes the real and imaginary part of complex number $z$. Here, we can also consider other types of complex-valued ReLU activation functions such as modReLU which is defined as 
    \begin{equation}\label{eq:modReLU}
        \psi\left(z\right) = \left\{
            \begin{aligned}
                &\left(\left|z\right|+q\right) \frac{z}{\left|z\right|} &&\text{if} \ \left|z\right|+q \geq 0 \text{,}  \\
                &0 &&\text{otherwise. }\\
            \end{aligned}
            \right.
    \end{equation}
    where $\left| z \right|$ is the absolute value (or modulus or magnitude) of the complex number $z$, and $q \in \mathbb{R}$ is a learnable parameter. 
    \item \textbf{Pooling layer \textrm{I}}: Pooling layers are used to reduce the size of the output of the previous convolutional layer. Here, we use an average pooling method to process the output of convolutional layer $\textrm{I}$. The size of the pooling window is $P^{\textrm{I}}$, and the size of the stride is $S^{\textrm{I}}$.
    \item \textbf{Convolutional layer \textrm{II}}: The input of convolutional layer \textrm{II} is $\overline{\boldsymbol{Y}}^{\textrm{I}} \in \mathbb{C}^{O^\textrm{I} \times C^\textrm{I} \times \overline{L}^\textrm{I}}$ which is the output of the pooling layer \textrm{I}. The output $\boldsymbol{Y}^{\textrm{II}} \in \mathbb{C}^{O^{\textrm{II}} \times C^{\textrm{II}} \times L^{\textrm{II}}}$ of this layer can be obtained via (\ref{eq:conv1d}), with $O^\textrm{II}$ being the number of the output channel. The parameters of this layer include the convolutional kernel $\boldsymbol{W}^{\textrm{II}} \in \mathbb{C}^{O^{\textrm{II}} \times O^{\textrm{I}} \times S_2 \times S_2}$ with $S_2 \times S_2$ being the size of a convolutional kernel, and the bias vector $\boldsymbol{b}^{\textrm{II}}$. 
    \item \textbf{Pooling layer \textrm{II}}: The input of pooling layer \textrm{II} is $\boldsymbol{Y}^{\textrm{II}}$ which is the output of convolutional layer \textrm{II}. The output of this layer is $\overline{\boldsymbol{Y}}^{\textrm{II}} \in \mathbb{C}^{O^{\textrm{II}} \times C^{\textrm{II}} \times \overline{L}^{\textrm{II}}} $. The size of the pooling window is $P^{\textrm{II}}$, and the size of the stride is $S^{\textrm{II}}$. 
    \item \textbf{Flatten layer}: The flatten layer is used to convert the output $\overline{\boldsymbol{Y}}^{\textrm{II}}$ of the pooling layer \textrm{II} to a row vector $\boldsymbol{y}^{\textrm{III}} \in \mathbb{C}^{1 \times O^{\textrm{II}}C^{\textrm{II}}\overline{L}^{\textrm{II}}}$.  
    \item \textbf{Fully connected layer \textrm{I}}: Fully connected layers are used to learn the relationships among the features extracted by convolutional layers. Given input $\boldsymbol{y}^{\textrm{III}}$, the output is   
    \begin{equation}\label{eq:fc}
        \boldsymbol{y}' = \phi_{1}\left( \boldsymbol{y}^{\textrm{III}} \boldsymbol{W}' + \boldsymbol{b}'\right), 
    \end{equation}
    where $\boldsymbol{W}' \in \mathbb{C}^{O^{\textrm{II}}C^{\textrm{II}}\overline{L}^{\textrm{II}} \times N^{\textrm{I}}}$ is the weight matrix, with $N^{\textrm{I}}$ being the number of neurons in fully connected layer \textrm{I}, $\boldsymbol{y}' \in \mathbb{C}^{1 \times N^{\textrm{I}}}$ is the output vector, and $\boldsymbol{b}' \in \mathbb{C}^{1 \times N^{\textrm{I}}}$ is the bias vector. 
    \item \textbf{Fully connected layer \textrm{II}}: The input of fully connected layer \textrm{II} is $\boldsymbol{y}'$ which is the output of fully connected layer~\textrm{I}. We assume that the number of neurons in this layer is $N^{\textrm{II}}$, and the parameters of this layer are $\boldsymbol{W}''$ and $\boldsymbol{b}''$. Then the relationship between $\boldsymbol{y}'$ and the output of this layer $\boldsymbol{y}''$ can be expressed using (\ref{eq:fc}). 
    \item \textbf{Output layer}: The output of our designed model is 
    \begin{equation}\label{eq:out}
        \hat{y} = \boldsymbol{y}'' \boldsymbol{w} + b, 
    \end{equation}
    where $\boldsymbol{w} \in \mathbb{C}^{N^{\textrm{II}} \times 1}$ is the weight vector, and $b \in \mathbb{C}$ is a bias parameter. The output of our designed model $\hat{y}$ is a complex number which consists of the real part and the imaginary part. Thus, in our proposed positioning system, we consider using different parts of the complex-valued output to work on different learning tasks. To introduce the use of the output $\hat{y}$ in our proposed positioning method, we first rewrite the output $\hat{y}$ as 
    \begin{equation}\label{eq:out_re}
        \hat{y} = \hat{a} + i\hat{b}, 
    \end{equation}
    where $\hat{a} \in \mathbb{R}$ is the real part of $\hat{y}$, and $\hat{b} \in \mathbb{R}$ is the imaginary part of $\hat{y}$. Given (\ref{eq:out_re}), we introduce two use cases of our proposed positioning method: 
    \begin{enumerate}[I.]
        \item The designed model can directly output the coordinates of the estimated position of the user. Therefore, output $\hat{y}$ is a estimated position of the user. To this end, $\hat{a}, \hat{b}$ are the coordinates of the estimated user position. 
        \item The designed algorithm can be used to extract CSI features. These CSI features can be used in traditional positioning algorithms, such as a TOA positioning method \cite{1703954}. Here, $\hat{a} \in \{0,1\}$ is used to identify whether transmission link is LOS. In particular, $\hat{a} = 1$ represents that the transmission link is LOS while $\hat{a} = 0$ represents that the transmission link is NLOS. $\hat{b}$ is the estimated TOA of the signal. In this use case, $\hat{a}$ and $\hat{b}$ are used in different learning tasks. Therefore, one can use the designed algorithm to perform two learning tasks. This is one of the key advantages of our designed algorithm since traditional RVNN based methods can only perform one learning task. 
    \end{enumerate}   
\end{itemize}
\begin{figure}[!t]
  \begin{center}
    \includegraphics[width=8cm]{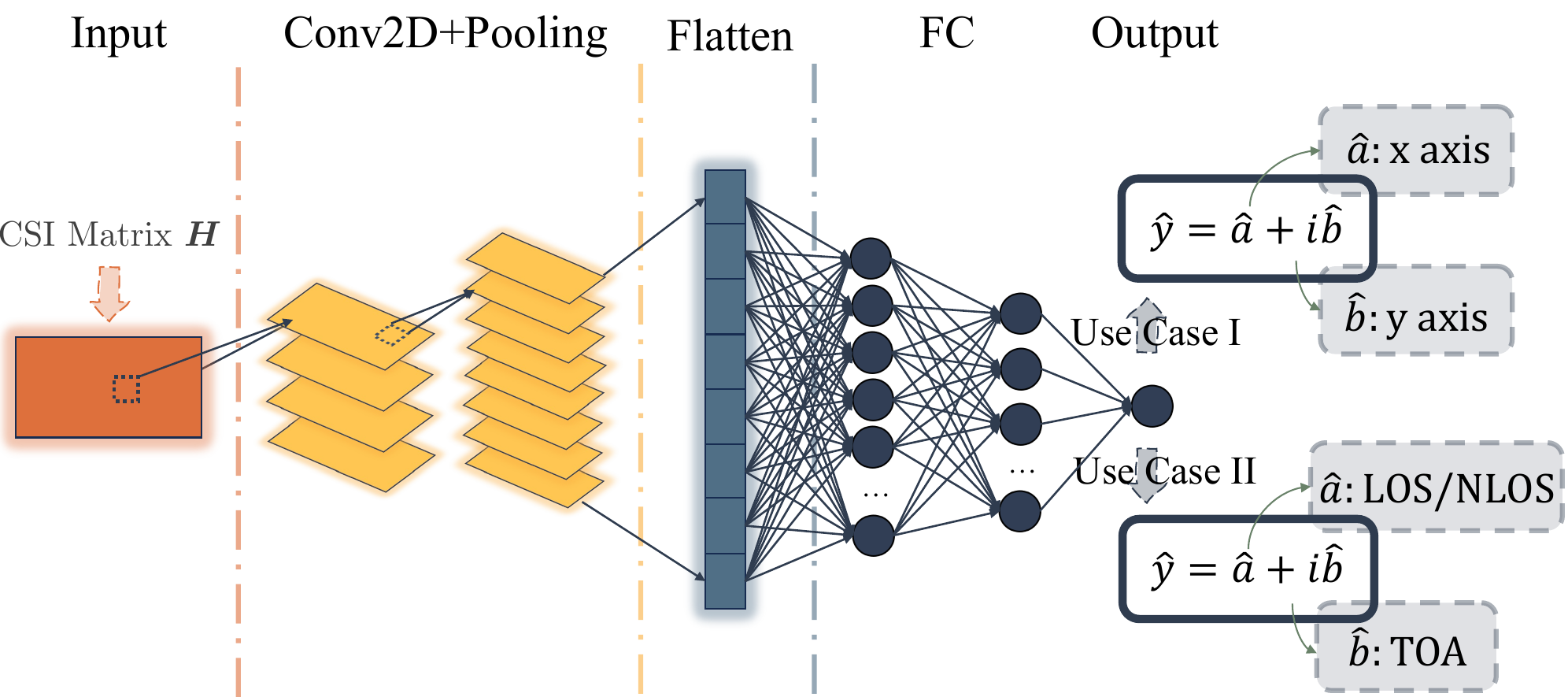}
    \caption{\label{fig:network} The CVNN model structure of use case \textrm{I} and use case \textrm{II}. }
  \end{center}
  \vspace{-0.5cm}
\end{figure}

\subsection{Training Procedure of Federated Learning Algorithm}
Given the local FL model of each device in Section \uppercase\expandafter{\romannumeral3}-A, we next introduce the method of training our designed FL algorithm. First, we introduce the local loss functions used to evaluate the performance of the local FL models over two use cases: \textrm{I}. user position estimation, \textrm{II}. LOS/NLOS transmission link classification and signal TOA estimation. Then, we explain the training process of our designed FL algorithm. \\

\subsubsection{Loss Function for Use Case \textrm{I}} In use case \textrm{I}, the real and the imaginary part of the output of our designed model are coordinates of the estimated user position. Here, we can use one loss function to measure the training loss of the real part and the imaginary part. Since user $u$ has $|\mathcal{D}_u|$ data samples, we assume that the output of the local FL model of user $u$ is $\hat{\boldsymbol{y}} \in \mathbb{C}^{|\mathcal{D}_u| \times 1}$. Then, the total loss function of the local FL model of each user $u$ for case \textrm{I} is given by
\begin{equation}\label{eq:loss1}
    J \left( \overline{\boldsymbol{W}}, \mathcal{D}_u, \boldsymbol{a}, \boldsymbol{b} \right) = \alpha \mathcal{L}_1 \left( \hat{\boldsymbol{a}}, \boldsymbol{a} \right) + \left( 1 - \alpha \right) \mathcal{L}_1 \left( \hat{\boldsymbol{b}}, \boldsymbol{b} \right) , 
\end{equation}
where $\alpha \in \left( 0,1 \right)$ is a weight parameter that determines the importance of the training loss at real and imaginary parts, $\overline{\boldsymbol{W}}$ is the parameters of our designed model including all the weights and bias defined in (\ref{eq:conv1d}), (\ref{eq:fc}), and (\ref{eq:out}), $\boldsymbol{a},\boldsymbol{b} \in \mathbb{R}^{|\mathcal{D}_u| \times 1}$ are the vectors of the user's ground truth positions (i.e., $\boldsymbol{P} = \left[ \boldsymbol{a}, \boldsymbol{b} \right]$), and $\hat{\boldsymbol{a}}, \hat{\boldsymbol{b}} \in \mathbb{R}^{|\mathcal{D}_u| \times 1}$ are vectors of the real and the imaginary part of $\hat{\boldsymbol{y}}$ (i.e., $\hat{\boldsymbol{y}} = \hat{\boldsymbol{a}} + i\hat{\boldsymbol{b}}$), and $\mathcal{L}_1 \left( \hat{\boldsymbol{a}},\boldsymbol{a} \right)$ is the mean squared error (MSE) loss function that measures the difference between the predicted result $\hat{\boldsymbol{a}}$ and the ground truth result $\boldsymbol{a}$. MSE is defined as  
\begin{equation}\label{eq:mse}
    \mathcal{L}_1 \left( \hat{\boldsymbol{a}},\boldsymbol{a} \right) = \frac{1}{|\mathcal{D}_u|} \sum_{i=1}^{|\mathcal{D}_u|} \left( \hat{a}_i - a_i \right) ^2,  
\end{equation}
where $\hat{a}_i$ is element $i$ of $\hat{\boldsymbol{a}}$, and $a_i$ is element $i$ of $\boldsymbol{a}$. \\

\subsubsection{Loss Function for Use Case II}
In use case \textrm{II}, the output of our designed algorithm is two CSI features. In our proposed scheme, the real part $\hat{\boldsymbol{a}}$ is LOS/NLOS classification results and the imaginary part $\hat{\boldsymbol{b}}$ is the predictions of the signal TOA. Since LOS/NLOS classification is a binary classification task and signal TOA prediction is a regression task, we use different types of loss functions to measure the training loss of the real and the imaginary part. In particular, we use binary cross entropy loss function to measure the LOS/NLOS classification accuracy, and use MSE to measure signal TOA prediction accuracy. Then, the total loss of our designed model used for case \textrm{II} is
\begin{equation}\label{eq:loss2}
    J \left( \overline{\boldsymbol{W}}, \mathcal{D}_u, \boldsymbol{a}, \boldsymbol{b} \right) = \beta \mathcal{L}_2 \left( \hat{\boldsymbol{a}}, \boldsymbol{a} \right) + \left( 1 - \beta \right) \mathcal{L}_1 \left( \hat{\boldsymbol{b}}, \boldsymbol{b} \right), 
\end{equation}
where $\beta \in \left( 0,1 \right)$ is a weight parameter to adjust the importance of the loss at real and imaginary parts, $\boldsymbol{a}$ is a vector of the LOS/NLOS link labels, $\boldsymbol{b}$ is the vector of ground truth TOA of the signal, and $\mathcal{L}_2 \left( \hat{\boldsymbol{a}}, \boldsymbol{a} \right)$ is the binary cross entropy with respect to the LOS/NLOS classification result $\hat{\boldsymbol{a}}$ and the LOS/NLOS label $\boldsymbol{a}$. The binary cross entropy loss function is defined as 
\begin{equation}\label{eq:bce}
    \mathcal{L}_2 \left( \hat{\boldsymbol{a}}, \boldsymbol{a} \right) = -\frac{1}{|\mathcal{D}_u|} \sum_{i=1}^{|\mathcal{D}_u|} \boldsymbol{a} \log \left( \delta \left( \hat{\boldsymbol{a}} \right) \right) + \left( 1 - \boldsymbol{a}\right)\log \left( 1 - \delta \left( \hat{\boldsymbol{a}} \right) \right), 
\end{equation}
where $\delta \left( \cdot \right)$ is the sigmoid function. 
From (\ref{eq:loss2}), we see that the CVNN model can process two different types of learning tasks simultaneously. Therefore, compared to RVNNs that can process only one learning task per training, a CVNN model can use less neurons to implement more learning tasks thus reducing ML model training complexity and saving ML model training time. \\

\subsubsection{Training Process}
Given the defined loss functions, next, we introduce the training process of our designed FL algorithm so as to find the optimal model to solve problem (\ref{eq:problem}). The designed FL training process consists of two steps. In the first step, the users will use their local datasets to update their local FL models. Then, the devices will transmit their local FL model parameters to the server which aggregates the received local FL model parameters to generate a global model $\boldsymbol{g}$. Then, the global model $\boldsymbol{g}$ will be transmitted back to all users so that the users can update their local FL models continuously. Next, we introduce the local model update and global FL model update seperately. 
\begin{itemize}
    \item \textbf{Local Model Update}: First, we introduce the process of updating the local FL model $\overline{\boldsymbol{W}}_{u}^{t}$ of user $u$ at iteration $t$. We use a back-propagation algorithm with a mini-batch stochastic gradient descent (SGD) approach to update the local FL model $\overline{\boldsymbol{W}}_u$ of each user $u$ \cite{8264077}. The update of $\overline{\boldsymbol{W}}_u$ at iteration $t$ is: 
    \begin{equation}\label{eq:update_loc}
        \overline{\boldsymbol{W}}_{u}^{t+1} = \boldsymbol{g}_{t} - h\left( \eta, t \right) \frac{\partial J \left( \boldsymbol{g}_{t}, \mathcal{B}_{u}^{t} \right)}{\partial  \boldsymbol{g}_{t}^{*}}, 
    \end{equation}
    where $\mathcal{B}_u^{t} \subset \mathcal{D}_u$ is a batch of data samples of user $u$ at iteration $t$, $h\left( \eta, t \right)$ is the function of learning rate that is determined by the base learning rate $\eta$ and iteration $t$, and $ \boldsymbol{g}_{t}^{*}$ is the conjugate of $\boldsymbol{g}_{t}$. From (\ref{eq:update_loc}), we can see that the direction of gradient descent for a CVNN model is the derivative with respect to $\boldsymbol{g}_{t}^{*}$ instead of $\boldsymbol{g}_{t}$. Here, for each user $u$ at each iteration $t$, its local model can be updated more than once \cite{chen2021communication}. 
    \item \textbf{Global Model Update}: Next, we introduce the process of the global FL model update at the server. In our designed CVNN based FL method, we assume that each user may not transmit the entire complex-valued weight parameters to the server. In particular, we assume that each user can transmit real part or imaginary part of CVNN model to the server. Let $\mathfrak{R} \left( \overline{\boldsymbol{W}}_u^t \right)$ and $\mathfrak{I} \left( \overline{\boldsymbol{W}}_u^t \right)$ be the real and imaginary part of the CVNN model. Then, the process of the server aggregating the received local FL parameters of all the participating users into a global FL model is~\cite{mcmahan2017communication}:  
    \begin{equation}\label{eq:update_glb}
        \boldsymbol{g}_{t} = \frac{\sum_{u=1}^{U} r_u^t \mathfrak{R} \left( \overline{\boldsymbol{W}}_u^t \right)}{\sum_{u=1}^{U} |\mathcal{B}_{u}^{t}| r_u^t} + i \frac{\sum_{u=1}^{U} m_u^t \mathfrak{I} \left( \overline{\boldsymbol{W}}_u^t \right)}{\sum_{u=1}^{U} |\mathcal{B}_{u}^{t}| m_u^t}, 
    \end{equation}
    where $r_u^t \in \{ 0, 1 \}$ is used to indicate whether user $u$ transmits the real part of  $\overline{\boldsymbol{W}}_u^t$ to the server, and $m_u^t \in \{ 0, 1 \}$ is used to indicate whether user $u$ transmits the imaginary part of  $\overline{\boldsymbol{W}}_u^t$ to the server. More specifically, $r_u^t=1$ implies that user $u$ will transmit the real part of the local FL model $\overline{\boldsymbol{W}}_u^t$ to the server at FL iteration $t$ and $r_u^t=0$ otherwise. Similarly, $m_u^t=1$ implies that user $u$ will transmit the imaginary part of $\overline{\boldsymbol{W}}_u^t$ to the server at FL iteration $t$ and $m_u^t=0$ otherwise. 
\end{itemize}
The entire training process is described in \textbf{Algorithm \ref{alg:algorithm1}}. We first initialize the local FL model parameters $\overline{\boldsymbol{W}}_u^0$ for each user $u$. Then, we perform the FL training. At the first iteration (i.e., $t=1$), each user $u$ uses $\overline{\boldsymbol{W}}_u^0$ to update its local FL model. Otherwise, each user uses the global FL model $\boldsymbol{g}_t$ received from the server to update its local FL model. After $T$ training iterations, we can obtain a common FL model $\overline{\boldsymbol{g}}$. 

\begin{algorithm}[!t]
    \small
    \caption{\small The Training Process of the CVNN-based FL Algorithm}
    \label{alg:algorithm1}
    \begin{algorithmic}
        \REQUIRE local dataset of all $U$ users $\mathcal{D}_1,...,\mathcal{D}_U$; 
        \ENSURE $\overline{\boldsymbol{W}}_1^0,...,\overline{\boldsymbol{W}}_U^0$; 
        \FOR {$t = 1 \to T$}
            \STATE \textbf{Local model update at each device: } 
            \FOR {$u=1 \to U$}
                \STATE User $u$ uses $\mathcal{B}_{u}^t \subset \mathcal{D}_u$ to train the local FL model and obtain the prediction $\hat{\boldsymbol{y}}_{u}^t$; 
                \IF {$t=1$} 
                    \STATE User $u$ calculates the loss $J \left( \overline{\boldsymbol{W}}_u^0, \mathcal{B}_{u}^1 \right)$ based on (\ref{eq:loss1}) for case \uppercase\expandafter{\romannumeral1} or (\ref{eq:loss2}) for case \uppercase\expandafter{\romannumeral2}; 
                \ELSE
                    \STATE User $u$ calculates the loss $J \left( \boldsymbol{g}_{t-1}, \mathcal{B}_{u}^t \right)$ based on (\ref{eq:loss1}) for case \uppercase\expandafter{\romannumeral1} or (\ref{eq:loss2}) for case \uppercase\expandafter{\romannumeral2}; 
                \ENDIF
                \STATE User $u$ updates $\overline{\boldsymbol{W}}_{u}^t$ based on (\ref{eq:update_loc}) 
            \ENDFOR
            \STATE \textbf{Global model update at the server}: 
            \STATE The server updates $\boldsymbol{g}_t$ based on (\ref{eq:update_glb})
        \ENDFOR
    \end{algorithmic}
\end{algorithm}

\section{Convergence Analysis}
Next, we analyze the convergence and implementation of our proposed CVNN based FL. 

\subsection{Convergence Analysis of the Designed CVNN based FL}
We assume that $J_u \left( \boldsymbol{g}_t, \mathcal{B}_u^t \right)$ is the loss of user $u$ at iteration $t$, and $J \left( \boldsymbol{g}_t \right) = \frac{1}{N} \sum_{u=1}^U J_u \left( \boldsymbol{g}_t, \mathcal{B}_u^t \right)$ is the total loss of the FL algorithm at iteration $t$, with $N = \sum_{u=1}^U |\mathcal{B}_u^t|$. Given (\ref{eq:update_loc}) and (\ref{eq:update_glb}), the global FL model at iteration $t+1$ is updated by 
\begin{equation}\label{eq:update}
    \boldsymbol{g}_{t+1} = \boldsymbol{g}_t - h\left( \eta,t \right) \left( \nabla J \left( \boldsymbol{g}_t \right) - \boldsymbol{o} \right), 
\end{equation}
where $\boldsymbol{o} = \nabla J \left( \boldsymbol{g}_t \right) - \frac{\sum_{u=1}^U r_u^t \mathfrak{R}\left(\nabla J_u \left(\boldsymbol{g}_t\right)\right)}{\sum_{u=1}^U |\mathcal{B}_u^t| r_u^t} - i \frac{ \sum_{u=1}^U m_u^t \mathfrak{I}\left(\nabla J_u\left(\boldsymbol{g}_t\right)\right)}{\sum_{u=1}^U |\mathcal{B}_u^t| m_u^t}$. To analyze the convergence of the designed FL, we first make the following assumptions, as done in \cite{chen2020joint, amiri2021convergence}. 
\begin{assumption}
    \emph{We assume that the total loss function $J\left( \boldsymbol{g} \right)$ and the gradient $\nabla J\left( \boldsymbol{g} \right)$ of $J\left( \boldsymbol{g} \right)$ are complex-differentiable. }
\end{assumption}
\begin{assumption}\label{aspt:2}
    \emph{We assume that the gradient $\nabla J\left( \boldsymbol{g} \right)$ of the total loss $J\left( \boldsymbol{g} \right)$ is uniformly Lipschitz continuous with respect to the global FL model $\boldsymbol{g}$. Then, we have 
    \begin{equation}\label{eq:lip}
        \lVert \nabla J\left( \boldsymbol{g}_{t+1} \right) - \nabla J\left( \boldsymbol{g}_t \right) \rVert \leq Z \lVert \boldsymbol{g}_{t+1} - \boldsymbol{g}_t \rVert, 
    \end{equation}
    where $Z$ is a positive constant. }
\end{assumption}
\begin{assumption}
    \emph{We assume that $J\left( \boldsymbol{g} \right)$ is strongly convex with respect to a positive constant $\mu$. Then, we have 
    \begin{equation}\label{eq:convex}
        \begin{split}
            J\left( \boldsymbol{g}_{t+1} \right) \geq &\left( \boldsymbol{g}_t \right) + \left( \boldsymbol{g}_{t+1}-\boldsymbol{g}_t \right)^T \nabla J\left( \boldsymbol{g}_t \right) + \\
            &\frac{\mu}{2} \lVert \boldsymbol{g}_{t+1}-\boldsymbol{g}_t \rVert^2.
        \end{split} 
    \end{equation}}
\end{assumption}
\begin{assumption}
    \emph{We assume that 
    \begin{equation}\label{eq:bound1}
        \lVert \mathfrak{R} \left( \nabla J_u \left( \boldsymbol{g}_t, \boldsymbol{H}_{u,k}, \boldsymbol{p}_{u,k} \right) \right) \rVert^2 \leq \zeta_1 + \zeta_2 \lVert \nabla J \left( \boldsymbol{g}_t \right) \rVert^2, 
    \end{equation}
    \begin{equation}
        \lVert \mathfrak{I} \left( \nabla J_u \left( \boldsymbol{g}_t, \boldsymbol{H}_{u,k}, \boldsymbol{p}_{u,k} \right) \right) \rVert^2 \leq \zeta_1 + \zeta_2 \lVert \nabla J \left( \boldsymbol{g}_t \right) \rVert^2, 
    \end{equation}
    with $\boldsymbol{H}_{u,k}, \boldsymbol{p}_{u,k}$ being the components of the data sample $k$ in $\mathcal{B}_u^t$, and $\zeta_1, \zeta_2 \geq 0$. }
\end{assumption}

Given these assumptions, the convergence of our designed FL is analyzed in the following theorem. 

\begin{theorem}\label{th:converge}
    \emph{Given the transmission indicators $\boldsymbol{r}^t$ and $\boldsymbol{m}^t$, the optimal global FL model $\overline{\boldsymbol{g}}$, and the learning rate $h \left( \eta, t \right) = \frac{1}{Z}$, the upper bound of $\mathbb{E} \left( J(\boldsymbol{g}_{t+1}) - J(\overline{\boldsymbol{g}}) \right)$ can given by
    \begin{equation}\label{eq:converge}
    \begin{split}
        \mathbb{E} &\left( J(\boldsymbol{g}_{t+1}) - J(\overline{\boldsymbol{g}}) \right) \\
        & \leq A^{t}  \mathbb{E} \left( J(\boldsymbol{g}_{1}) - J(\overline{\boldsymbol{g}}) \right) + \left( \frac{1-A^{t-1}}{1-A} \right) \frac{2 \zeta_1 E}{ZN}, 
    \end{split}
    \end{equation}
    where $A = 1-\frac{\mu}{Z}+\frac{4\mu\zeta_2E}{NZ}$, and $E = 2N - \mathbb{E} \left( \sum_{u=1}^U |\mathcal{B}_u^t| r_u^t + \sum_{u=1}^U |\mathcal{B}_u^t| m_u^t \right)$ with $\mathbb{E} \left( \sum_{u=1}^U |\mathcal{B}_u^t| r_u^t \right)$ being the expected total training samples of the users that send $\mathfrak{R} \left( \overline{\boldsymbol{W}}_u^t \right)$ to the server}, and $\mathbb{E} \left( \sum_{u=1}^U |\mathcal{B}_u^t| m_u^t \right)$ being the expected total training samples of the users that send $\mathfrak{I} \left( \overline{\boldsymbol{W}}_u^t \right)$ to the server. 
\end{theorem}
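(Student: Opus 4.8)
The plan is to follow the standard descent-lemma plus geometric-recursion argument for strongly convex, Lipschitz-smooth federated optimization, carried out on the complex parameter space by identifying $\mathbb{C}^{n}$ with $\mathbb{R}^{2n}$ so that the usual real-analysis inequalities apply jointly to the real and imaginary parts of the model. First I would apply the Lipschitz-smoothness bound of Assumption~\ref{aspt:2}, namely
\begin{equation*}
  J(\boldsymbol{g}_{t+1}) \leq J(\boldsymbol{g}_t) + (\boldsymbol{g}_{t+1}-\boldsymbol{g}_t)^T \nabla J(\boldsymbol{g}_t) + \frac{Z}{2}\lVert \boldsymbol{g}_{t+1}-\boldsymbol{g}_t\rVert^2,
\end{equation*}
and substitute the global update rule (\ref{eq:update}), i.e.\ $\boldsymbol{g}_{t+1}-\boldsymbol{g}_t = -h(\eta,t)(\nabla J(\boldsymbol{g}_t)-\boldsymbol{o})$. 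Choosing $h(\eta,t)=1/Z$ and expanding $\lVert\nabla J(\boldsymbol{g}_t)-\boldsymbol{o}\rVert^{2}$, the cross terms cancel and this collapses to the descent inequality
\begin{equation*}
  J(\boldsymbol{g}_{t+1}) \leq J(\boldsymbol{g}_t) - \frac{1}{2Z}\lVert \nabla J(\boldsymbol{g}_t)\rVert^2 + \frac{1}{2Z}\lVert \boldsymbol{o}\rVert^2 .
\end{equation*}

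Next I would bound $\mathbb{E}\lVert\boldsymbol{o}\rVert^{2}$, and this is where I expect the real work to lie. Splitting $\boldsymbol{o}$ into its real and imaginary parts, each part is the discrepancy between the exact (real, resp.\ imaginary) aggregated gradient and the partial aggregate taken over only those users that transmitted that part. Expanding these discrepancies, applying the Cauchy--Schwarz inequality over the $U$ per-user contributions, invoking the gradient bounds (\ref{eq:bound1}) on $\lVert\mathfrak{R}(\nabla J_u)\rVert^{2}$ and $\lVert\mathfrak{I}(\nabla J_u)\rVert^{2}$, and taking expectation over the random transmission indicators $\boldsymbol{r}^{t},\boldsymbol{m}^{t}$ (and the mini-batches), the \emph{missing sample mass} should collect into the quantity $E = 2N - \mathbb{E}(\sum_{u=1}^{U} |\mathcal{B}_u^t| r_u^t + \sum_{u=1}^{U} |\mathcal{B}_u^t| m_u^t)$, producing a bound of the form $\mathbb{E}\lVert\boldsymbol{o}\rVert^{2} \leq \frac{4E}{N}(\zeta_1 + \zeta_2\,\mathbb{E}\lVert\nabla J(\boldsymbol{g}_t)\rVert^{2})$. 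The delicate points are getting the combinatorial prefactor $E$ exactly right and handling the expectation over the nonlinear normalizing denominators $\sum_{u=1}^{U} |\mathcal{B}_u^t| r_u^t$ and $\sum_{u=1}^{U} |\mathcal{B}_u^t| m_u^t$, which presumably requires conditioning on (or assuming) that at least one user transmits each part at every round.

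Finally I would bring in strong convexity: minimizing the right-hand side of (\ref{eq:convex}) over $\boldsymbol{g}_{t+1}$ yields the Polyak--\L ojasiewicz-type inequality $\lVert\nabla J(\boldsymbol{g}_t)\rVert^{2} \geq 2\mu(J(\boldsymbol{g}_t)-J(\overline{\boldsymbol{g}}))$. Taking expectations in the descent inequality, substituting the bound on $\mathbb{E}\lVert\boldsymbol{o}\rVert^{2}$, merging the two $\lVert\nabla J(\boldsymbol{g}_t)\rVert^{2}$ terms (which needs $4E\zeta_2<N$ so the net coefficient stays negative) and then applying the Polyak--\L ojasiewicz bound, one obtains the one-step linear recursion
\begin{equation*}
  \mathbb{E}(J(\boldsymbol{g}_{t+1})-J(\overline{\boldsymbol{g}})) \leq A\,\mathbb{E}(J(\boldsymbol{g}_t)-J(\overline{\boldsymbol{g}})) + \frac{2\zeta_1 E}{ZN}, \qquad A = 1-\frac{\mu}{Z}+\frac{4\mu\zeta_2 E}{NZ}.
\end{equation*}
Unrolling this recursion from $t=1$ and summing the geometric series $\sum_{j=0}^{t-2}A^{j}=\frac{1-A^{t-1}}{1-A}$ then gives (\ref{eq:converge}). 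As a sanity check I would confirm that when every user transmits both parts at every round the deficit $E$ vanishes, so that $A=1-\mu/Z$ and the additive term disappears, recovering the classical linear convergence rate of SGD for strongly convex objectives.
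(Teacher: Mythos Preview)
Your proposal is correct and follows essentially the same route as the paper's proof: the Lipschitz descent lemma applied to the update (\ref{eq:update}) with $h=1/Z$, the bound $\mathbb{E}\lVert\boldsymbol{o}\rVert^{2}\le \frac{4E}{N}(\zeta_1+\zeta_2\lVert\nabla J(\boldsymbol{g}_t)\rVert^{2})$ obtained by splitting $\boldsymbol{o}$ into real and imaginary parts and using Assumption~4, the PL inequality from strong convexity, and the geometric unrolling. The paper carries out the $\lVert\boldsymbol{o}\rVert^{2}$ bound by explicitly partitioning users into the transmitting set $\mathcal{R}_1^t$ and its complement $\mathcal{R}_2^t$ and then bounding each piece via (\ref{eq:bound1}), which is exactly the calculation you sketch; your caveat about the random denominators $\sum_u|\mathcal{B}_u^t|r_u^t$ is well placed, as the paper also handles this loosely.
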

\begin{proof}
   See Appendix A. 
\end{proof}
In Theorem \ref{th:converge}, $\boldsymbol{g}_{t+1}$ is the global FL model that is generated based on the real and imaginary parts of FL models transmitted by the users at iteration $t+1$. From Theorem \ref{th:converge}, we can see that a gap, $\left( \frac{1-A^{t}}{1-A} \right) \frac{2 \zeta_1 E}{ZN}$, exists between $\mathbb{E} \left( J(\boldsymbol{g}_{t+1}) \right)$ and $\mathbb{E} \left( J(\overline{\boldsymbol{g}}) \right)$. The gap is caused by the policy of real part and imaginary part of FL model transmission. When the number of users that transmit real or imaginary part of FL models increases, the value of $A$ decreases, and thus the gap $\left( \frac{1-A^{t}}{1-A} \right) \frac{2 \zeta_1 E}{ZN}$ decreases and the convergence speed of FL increases. Based on Theorem \ref{th:converge}, we can next derive the convergence rate of our designed FL algorithm when all users send their complete local FL models (i.e., both real and imaginary parts) to the server at all iterations. 
\begin{lemma}\label{pro:converge}
    \emph{Given the optimal FL model $\overline{\boldsymbol{g}}$, the learning rate $h \left( \eta, t \right) = \frac{1}{Z}$, and $r_u^t = m_u^t = 1$ for each user $u$, the upper bound of $\mathbb{E} \left( J(\boldsymbol{g}_{t+1}) - J(\overline{\boldsymbol{g}}) \right)$ is given by
    \begin{equation}\label{eq:converge_lemma}
        \mathbb{E} \left( J(\boldsymbol{g}_{t+1}) - J(\overline{\boldsymbol{g}}) \right) \leq \left( 1 - \frac{\mu}{Z}\right)^t \mathbb{E} \left( J(\boldsymbol{g}_{1}) - J(\overline{\boldsymbol{g}}) \right). 
    \end{equation}} 
\end{lemma}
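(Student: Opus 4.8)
The plan is to obtain Lemma~\ref{pro:converge} as an immediate specialization of Theorem~\ref{th:converge}, with no new estimate required. The key observation is that the quantity $E = 2N - \mathbb{E}\left( \sum_{u=1}^U |\mathcal{B}_u^t| r_u^t + \sum_{u=1}^U |\mathcal{B}_u^t| m_u^t \right)$, which simultaneously controls the contraction factor $A$ and the additive error floor in (\ref{eq:converge}), collapses to zero precisely when every user transmits both the real and the imaginary part of its model at every iteration.

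First I would substitute $r_u^t = m_u^t = 1$ for all $u$ into the quantities defined in Theorem~\ref{th:converge}. Then $\sum_{u=1}^U |\mathcal{B}_u^t| r_u^t = \sum_{u=1}^U |\mathcal{B}_u^t| = N$ and likewise $\sum_{u=1}^U |\mathcal{B}_u^t| m_u^t = N$, so $E = 2N - (N+N) = 0$. Consequently the contraction factor reduces to $A = 1 - \frac{\mu}{Z} + \frac{4\mu\zeta_2 E}{NZ} = 1 - \frac{\mu}{Z}$, and the residual term $\left( \frac{1-A^{t-1}}{1-A} \right)\frac{2\zeta_1 E}{ZN}$ vanishes identically because its numerator carries the factor $E = 0$. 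Substituting these two facts into (\ref{eq:converge}) yields exactly (\ref{eq:converge_lemma}).

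For completeness I would also note why this is intuitively the right answer: when all users report the full complex-valued model, the aggregation perturbation $\boldsymbol{o}$ in (\ref{eq:update}) is zero, so the global update becomes ordinary gradient descent with step size $1/Z$, and under Assumption~\ref{aspt:2} (Lipschitz-smooth gradient) together with the strong-convexity assumption the classical one-step descent inequality already gives geometric convergence at rate $1 - \mu/Z$. Theorem~\ref{th:converge} packages exactly this argument in the degenerate case $E=0$.

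I do not expect a genuine obstacle. The only point requiring care is the index bookkeeping in $E$ and $A$ and the consistency of the $t$ versus $t-1$ exponents in the residual term with the statement of Theorem~\ref{th:converge}; since that residual is multiplied by $E = 0$, the subtlety is moot, and the entire content of the lemma is inherited from the theorem.
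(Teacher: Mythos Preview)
Your proposal is correct and matches the paper's own proof essentially line for line: the paper also sets $r_u^t=m_u^t=1$, computes $E=0$, deduces $A=1-\mu/Z$ and that the residual term vanishes, and then reads off (\ref{eq:converge_lemma}) from (\ref{eq:converge}). Your remark that the $t$ versus $t-1$ discrepancy in the residual exponent is harmless because the whole term carries the factor $E=0$ is exactly the right way to dispose of that bookkeeping issue.
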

\begin{proof}
    Since all users send their complex-valued local FL models to the server at each iteration $t$, we have $\mathbb{E} \left( r_u^t \right) = 1$, $\mathbb{E} \left( m_u^t \right) = 1$, $E = 2N - \mathbb{E} \left( \sum_{u=1}^U |\mathcal{B}_u^t| r_u^t + \sum_{u=1}^U |\mathcal{B}_u^t| m_u^t \right) = 2N - 2 \sum_{u=1}^U |\mathcal{B}_u^t| = 0$. Since $E=0$, $A = 1 - \frac{\mu}{Z}$ and $\left( \frac{1-A^{t}}{1-A} \right) \frac{2 \zeta_1 E}{ZN} = 0$. Then, we substitute $A=0$ into (\ref{eq:converge}) to obtain (\ref{eq:converge_lemma}). This completes the proof. 
\end{proof}
From Proposition \ref{pro:converge}, we can see that, when all complete local FL models are sent to the server, our designed FL model will converge to the globally optimal. 

\subsection{Implementation and Complexity}
Here, we first analyze the implementation of our designed CVNN based FL algorithm. The implementation of the designed FL consists of local FL model update and global FL model update. For local FL model update, each user $u$ must collect a local CSI dataset $\mathcal{D}_u$. To update a local FL model at iteration $t$, each user $u$ must select a batch of data $\mathcal{B}_u^{t}$ from the local CSI dataset $\mathcal{D}_u$. Additionally, each user $u$ must receive the global FL model $\boldsymbol{g}_t$ from the server. For global FL model update at iteration $t$, the server must first receive the local FL model of each user $u$, the indicators $r_u^t$ and $m_u^t$, and the size of the training batch $|\mathcal{B}_u^t|$. \\

We next analyze the complexity of our designed algorithm. The time complexity of a local FL model can be evaluated by the number of multiplication operations. According to (\ref{eq:conv1d}), the time complexity of convolutional layers are respectively $\mathcal{O} \left( O^\textrm{I} S_1^2 C L \right)$ and $\mathcal{O} \left( O^\textrm{I} O^\textrm{II} S_2^2 C^\textrm{I} \overline{L}^\textrm{I} \right)$. The time complexity of pooling layers are respectively $\mathcal{O} \left( O^\textrm{I} C^\textrm{I} L^\textrm{I} \right)$ and $\mathcal{O} \left( O^\textrm{II} C^\textrm{II} L^\textrm{II} \right)$. From (\ref{eq:fc}), the time complexity of fully connected layer \textrm{I}, fully connected layer \textrm{II}, and the output layer are respectively  $\mathcal{O} \left( C^{\textrm{II}} \overline{L}^{\textrm{II}} N^{\textrm{I}} \right)$, $\mathcal{O} \left( N^{\textrm{I}} N^{\textrm{II}}\right)$, and $\mathcal{O} \left( N^{\textrm{II}} \right)$. Thus, the total time complexity of a local FL model is \cite{qiu2016going} 
$ \mathcal{O}  \left( O^\textrm{I} S_1^2 C L + O^\textrm{I} O^\textrm{II} S_2^2 C^\textrm{I} \overline{L}^\textrm{I} + O^\textrm{I} C^\textrm{I} L^\textrm{I} + O^\textrm{II} C^\textrm{II} L^\textrm{II} \right. + \left. C^{\textrm{II}} \overline{L}^{\textrm{II}} N^{\textrm{I}} + N^{\textrm{I}} N^{\textrm{II}} + N^{\textrm{II}} \right) \approx \mathcal{O} \left( O^\textrm{I} O^\textrm{II} S_2^2 C^\textrm{I} \overline{L}^\textrm{I} \right). $

The space complexity of the model refers to the memory footprint. Given the introduction of components of our designed local FL model, for each user $u$, the space complexity of convolutional layers are respectively $\mathcal{O} \left( C C^{\textrm{I}} S_1^2 \right)$ and $\mathcal{O} \left( C^{\textrm{I}} C^{\textrm{II}} S_2^2 \right)$. The space complexity of fully connected layer \textrm{I}, fully connected layers are respectively $\mathcal{O} \left( C^{\textrm{II}} \overline{L}^{\textrm{II}} N^{\textrm{I}} \right)$, $\mathcal{O} \left( + N^{\textrm{I}} N^{\textrm{II}} \right)$, and $\mathcal{O} \left( N^{\textrm{II}} \right)$. Thus, the total space complexity of a local FL model is 
    $\mathcal{O} \left( C C^{\textrm{I}} S_1^2 + C^{\textrm{I}} C^{\textrm{II}} S_2^2                            
           + C^{\textrm{II}} \overline{L}^{\textrm{II}} N^{\textrm{I}} + N^{\textrm{I}} N^{\textrm{II}} + N^{\textrm{II}} \right)  \approx \mathcal{O} \left( C^{\textrm{I}} C^{\textrm{II}} S_2^2 \right). $

\section{Simulation Results}\label{se:simulation}
In this section, we perform extensive simulations to evaluate the performance of our designed CVNN based FL in two specific scenarios: 1) the output of our designed algorithm is the estimated positions of users, 2) the output of our designed algorithm is two CSI feature which can be used for traditional positioning methods. We first introduce the CSI dataset used to train the designed CVNN model. Then, we explain the parameters of our proposed CVNN model and a RVNN model based baseline. Finally, we analyze the simulation results of our designed CVNN model. Note that, in Figs. \ref{fig:mse_set2_pos}, \ref{fig:real_format}, \ref{fig:mse_set2_los}, \ref{fig:mse_set2_los_user}, and \ref{fig:mse_set1_pos} we have removed the initial epochs where the loss is very large so as to clearly show the gap of the loss between our designed CVNN based method and the baseline RVNN based method when the considered algorithms converge. 

\subsection{Dataset Introduction}
\subsubsection{5G CSI Dataset} The first CSI dataset we use to evaluate our designed CVNN based FL algorithm is from \cite{pan2023situ}. At each position, 100 CSI data are collected over 4 antennas and 1632 subcarriers. In our simulation, we only use the CSI data collected by antennas 1 and 2 (i.e., $C$ = 2). At each antenna, we transfer the CSI data from the frequency domain to the time domain by the inverse Fourier transform. For simplicity, we use only the first 250 sampling intervals and hence $L = 250$. Thus, in our simulations, we have 47600 CSI samples in total. We assign 42840 samples to each user equally such that each user has 3570 data samples. \\

\subsubsection{Cellular Ultra Dense CSI Dataset} The CSI dataset in \cite{9535488} is used to train our designed CVNN based FL model. The position of the server and the moving areas of $U$ users.
In \cite{9535488}, the server equipped with 64 antennas collects CSI data using three different antenna array topologies: 1) a uniform linear array (ULA) of 1 $\times$ 64 antennas, 2) a uniform rectangular array (URA) of 8 $\times$ 8 antennas, and 3) eight distributed ULAs of 1 $\times$ 8 antennas. In our simulations, we use the data collected by the antennas with URA topology. For simplicity, we use only the CSI data collected by 2 antennas (i.e., $C = 2$) and the position coordinate of the antennas is $\left[ -175,0 \right]$. Each CSI signal is collected over 100 sampling intervals and hence $L = 100$. Each antenna collects 264001 data samples and each data sample consists of CSI, position coordinate of the user, and the label of LOS/NLOS signal transmission link. Since the time slots of two successive data samples are very close, we only take one sample from every 10 samples. Hence, in our simulations, we use 25201 data samples and assign 22680 samples to all users equally such that each user has 1890 data samples. For different use cases, we use the same CSI matrix as the input of our designed FL algorithm while the labels are different. In particular, for use case \textrm{I}, the output is the user's position coordinate $\boldsymbol{p}$. For use case \textrm{II}, the output is the distance between the user and the server, and LOS/NLOS link classification result. 

\subsection{CVNN Based FL Algorithm Parameter Introduction}\label{ss:model_param}
The parameters of the designed CVNN based FL Algorithm are summarized in Table \ref{table:systemparameters}. The function of learning rate $h \left( \eta, t \right)$ is 
 \begin{equation}\label{eq:eta}
    h \left( \eta, t \right) = \left\{
            \begin{aligned}
                &\eta && t \leq 50 , \\
                &\frac{1}{5} \eta && 50 < t \leq 75 ,\\
                &\frac{1}{2} \eta && t > 75 . \\
            \end{aligned}
            \right.
\end{equation}
 For comparison purposes, we use a RVNN based local FL model as the baseline. The baseline model parameters are similar to the CVNN based local FL model. We separate the real part and the imaginary part of each CSI sample of the dataset into two matrices $\mathfrak{R} \left( \boldsymbol{H} \right)$ and $\mathfrak{I} \left( \boldsymbol{H} \right)$. Then, the input of the RVNN is $\left[ \mathfrak{R} \left( \boldsymbol{H} \right),\mathfrak{I} \left( \boldsymbol{H} \right) \right]$, and the output is $\left[ \mathfrak{R} \left( \hat{y} \right), \mathfrak{I} \left( \hat{y} \right) \right]$. We can see that the input layer and the output layer of the RVNN is double of the CVNN model. Note that, the weight matrices and bias of the RVNN are all real-valued. 
\begin{table}[!t]
    \caption{System Parameters}
    \label{table:systemparameters}
    \centering
    \begin{tabular}{|c|c|c|c|}
        \hline
        \textbf{Parameter} & \textbf{Value} & \textbf{Parameter} & \textbf{Value} \\
        \hline
        $T$ & 85 & $| \mathcal{B}_u^t |$ & 32 \\
        \hline
        $\eta$ & $1 \times 10^{-4}$ & $O^{\textrm{I}}$ & 4 \\
        \hline
        $S_1$ & 2 & $P^{\textrm{I}}$ & 5 \\ 
        \hline
        $S^{\textrm{I}}$ & 1 & $O^{\textrm{II}}$ & 8 \\
        \hline
        $S_2$ & 2 & $P^{\textrm{II}}$ & 9 \\
        \hline
        $S^\textrm{II}$ & 2 & $N^{\textrm{I}}$ & 64 \\
        \hline
        $N^{\textrm{II}}$ & 32 & &  \\
        \hline
    \end{tabular}
\end{table}  

\begin{figure}[!t]
  \begin{center}
    \includegraphics[width=7.5cm]{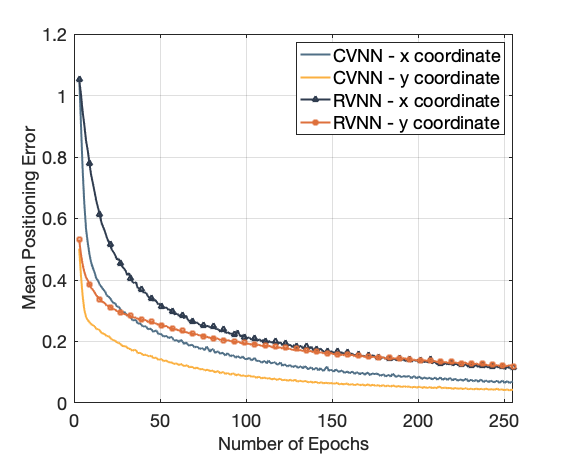}
    \caption{The training loss changes as the number of training iterations varies for use case \uppercase\expandafter{\romannumeral1} of the 5G CSI dataset. } 
    \label{fig:mse_set2_pos}
    \vspace{-0.4cm}
  \end{center}
\end{figure}

\subsection{Simulation Results of the 5G CSI Dataset}
In Fig. \ref{fig:mse_set2_pos}, we show how the value of the positioning error defined in (\ref{eq:problem}) changes as the number of training iterations varies. Fig. \ref{fig:mse_set2_pos} shows that as the number of iterations increases, the mean positioning errors of both considered algorithms decreases. This is because the models in the considered algorithms are updated by the CSI data at each iteration. From Fig. \ref{fig:mse_set2_pos}, we also see that our designed FL algorithm can achieve up to 36\% gain in terms of mean positioning error compared to the RVNN baseline. This is due to the fact that the CVNN model can directly process complex-valued CSI data without any data transformation thus obtaining more CSI features. \\

Fig. \ref{fig:cdf_set2_pos} shows the cumulative distribution function (CDF) of the positioning error resulting from our designed algorithm and the RVNN baseline. From Fig. \ref{fig:cdf_set2_pos} we see that, compare to the RVNN baseline, our designed algorithm improves the CDF of up to 33\% gains at a positioning error of 0.2 compared to the RVNN baseline. This is because our designed model does not need to preprocess complex-valued CSI data, thus it can obtain more CSI features compared to the RVNN baseline \cite{8297024}. \\
\begin{figure}[t]
  \begin{center}
    \includegraphics[width=7.5cm]{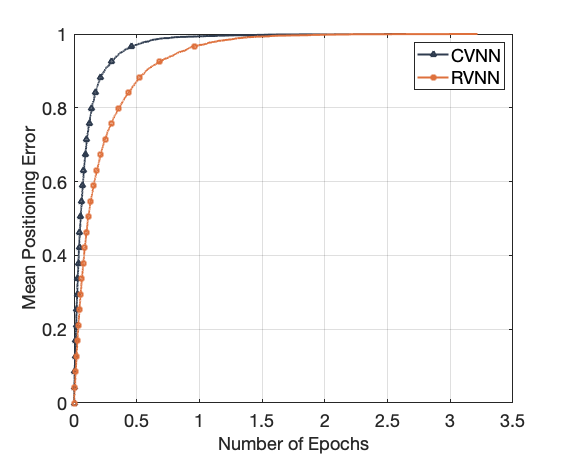}
    \caption{CDF of positioning MSE for use case \textrm{I} of the 5G CSI dataset. } 
    \label{fig:cdf_set2_pos}
    \vspace{-0.5cm}
  \end{center}
\end{figure}

\begin{figure}[t]
    \centering
    \subfigure[x coordinate] {\includegraphics[width=7.5cm]{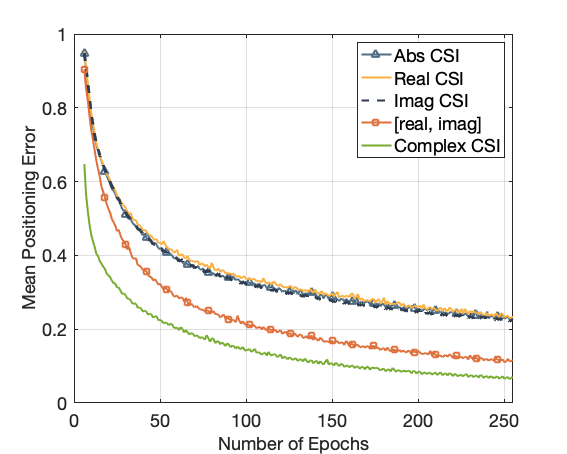}}
    \subfigure[y coordinate] {\includegraphics[width=7.5cm]{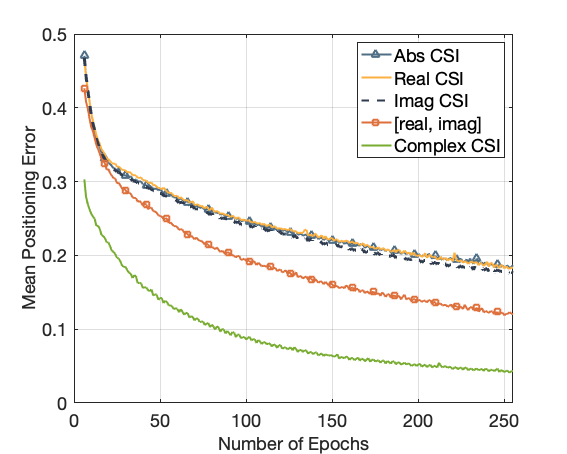}}
    \caption{The training loss changes as the number of training iterations varies for use case \uppercase\expandafter{\romannumeral1} of the 5G CSI dataset. } 
    \label{fig:real_format}
    \vspace{-0.5cm}
\end{figure}

\begin{figure}[t]
  \begin{center}
    \includegraphics[width=7.5cm]{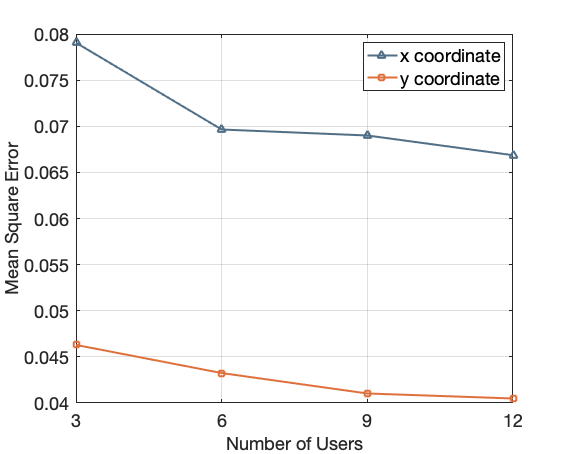}
    \caption{The optimal training loss changes as the number of users varies for use case \textrm{I} of the 5G CSI dataset.  } 
    \label{fig:usr_num}
    \vspace{-.5cm}
  \end{center}
\end{figure}

\begin{figure}[t]
  \begin{center}
    \includegraphics[width=7.5cm]{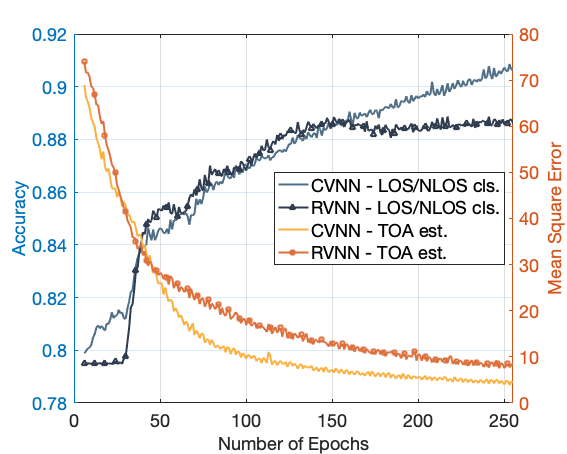}
    \caption{The mean TOA estimation error and the mean accuracy of the LOS/NLOS classification change as the number of training iterations varies for use case \uppercase\expandafter{\romannumeral2} of the 5G CSI dataset. } 
    \label{fig:mse_set2_los}
    \vspace{-.3cm}
  \end{center}
\end{figure}

\begin{figure*}[h]
    \centering
    \subfigure[User A] {\includegraphics[width=.32\linewidth]{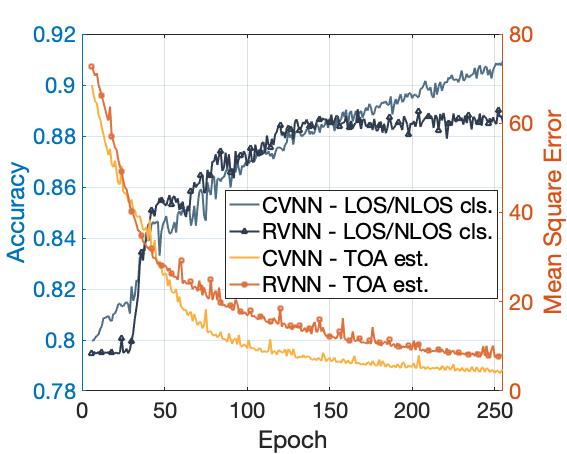}}
    \subfigure[User B] {\includegraphics[width=.32\linewidth]{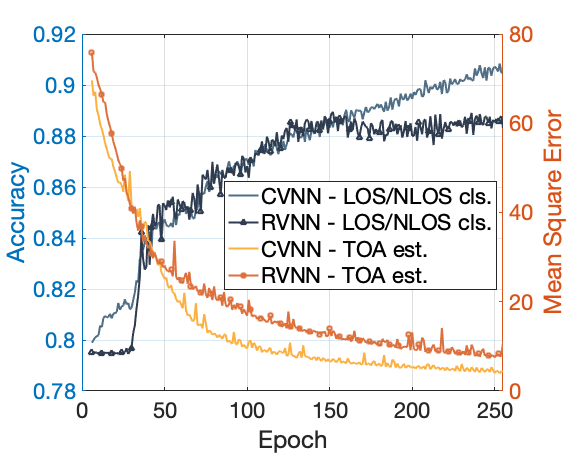}}
    \subfigure[User C] {\includegraphics[width=.32\linewidth]{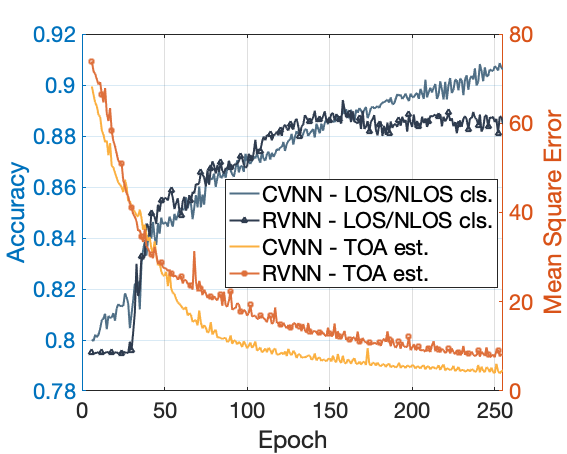}}
    \caption{The value of the TOA estimation error and the LOS/NLOS classification accuracy of three users change as the training iterations varies for use case \textrm{II} of the 5G CSI dataset. } 
    \label{fig:mse_set2_los_user}
    \vspace{-0.3cm}
\end{figure*}

In Fig. \ref{fig:real_format}, we show how the value of the positioning error changes as the number of training iterations varies when the CSI data is transformed into real-valued data via different methods. Here, we consider the use of four methods to process CSI data: 1) Using the real part of the CSI (i.e., $\mathfrak{R} \left( \boldsymbol{H} \right)$) and ignore the imaginary part of the CSI data, 2) using the imaginary part of the CSI (i.e., $\mathfrak{I} \left( \boldsymbol{H} \right)$) and ignore the real part of the CSI data, 3) using the absolute value of the CSI (i.e., $\left| \boldsymbol{H} \right|$), and 4) using the RVNN baseline that is described in Section \ref{ss:model_param}. From Fig. \ref{fig:real_format}, we see that the positions of users can be estimated even when we use the real or the imaginary part of CSI data as input. This is because both real and imaginary parts of CSI data contain positioning information. Fig. \ref{fig:real_format} also shows the RVNN baseline can achieve up to 49.35\%, 48.55\%, and 50.50\% gains in terms of the mean positioning error compared to the RVNNs trained by the absolute value of the CSI, the imaginary part of CSI, and the real part of CSI. This is due to the fact that the RVNN baseline uses both real and imaginary parts of CSI data for user positioning. However, the RVNN baseline doubles the size of the input vector of the ML model, which may significantly increase the training complexity of the ML model.

In Fig. \ref{fig:usr_num}, we show how the value of the positioning error changes as the number of training iterations varies. From Fig. \ref{fig:usr_num}, we see that as $U$ increases, the mean positioning error decreases. This is because when more users participate in the FL training, the total number of training samples used for training FL models increases.

Fig. \ref{fig:mse_set2_los} shows the mean TOA estimation error and the mean accuracy of the LOS/NLOS classification change as the number of training iterations varies. From Fig. \ref{fig:mse_set2_los}, we see that our designed algorithm can achieve up to 53.28\% gain in terms of the TOA estimation error, and 1.44\% gain in terms of the LOS/NLOS classification accuracy. This is due to the fact that the CVNN has a better generalization ability to process complex-valued data compared to the RVNN.

\begin{figure}[t]
  \begin{center}
  \includegraphics[width=7.5cm]{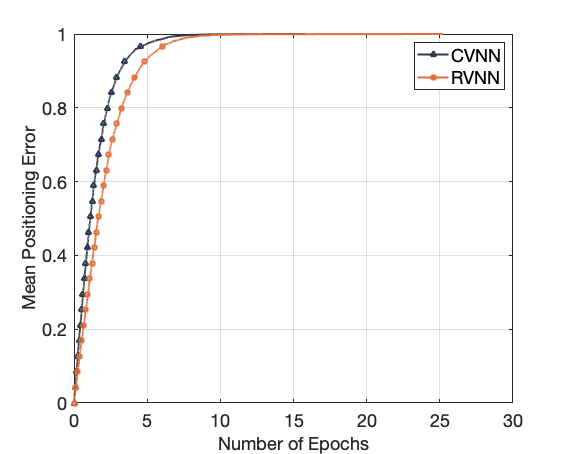}
  \caption{CDF of the TOA estimation error for use case \textrm{II} of the 5G CSI dataset. } 
  \label{fig:cdf_set2_los}
  \vspace{-.5cm}
  \end{center}
\end{figure}

\begin{figure}[t]
  \begin{center}
    \includegraphics[width=7.5cm]{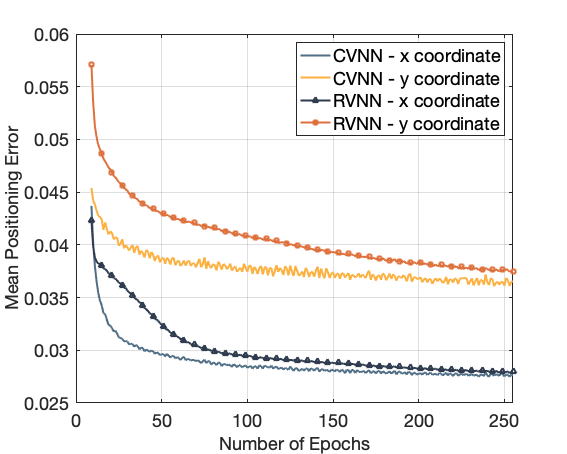}
    \caption{The value of the positioning error changes as the number of training iterations varies for use case \uppercase\expandafter{\romannumeral1} of the Cellular Ultra Dense CSI dataset. } 
    \label{fig:mse_set1_pos}
    \vspace{-0.5cm}
  \end{center}
\end{figure}


In Fig. \ref{fig:mse_set2_los_user}, we show how the value of the positioning errors, the value of the TOA estimation error, and the LOS/NLOS classification accuracy of three users change as the training iterations varies. These three users are randomly selected from 6 users. From Fig. \ref{fig:mse_set2_los_user}, we see that three users have different mean positioning errors. This is because the local FL model of each user is trained by its local dataset, and different users have different local datasets. \\

Fig. \ref{fig:cdf_set2_los} shows the CDF of the TOA estimation error resulting from our designed algorithm and the RVNN baseline. From Fig. \ref{fig:cdf_set2_los} we see that, compare to the RVNN baseline, our designed algorithm improves the CDF of up to 16.14\% gains at a positioning error of 3 compared to the RVNN baseline. This is because the complex-valued activation function defined in (\ref{eq:phi1}) can reduce redundant information of the training CSI data and help the model learn a more sparse representation~\cite{karnewar2022relu}.

\subsection{Simulation Results of the Cellular Ultra Dense CSI Dataset}
In Fig. \ref{fig:mse_set1_pos}, we show how the value of the positioning error defined in (\ref{eq:problem}) changes as the number of training iterations varies. Fig. \ref{fig:mse_set1_pos} shows that our designed FL algorithm can achieve up to 1.01\% and 5.68\% gains in terms of x and y coordinates mean positioning error compared to the RVNN baseline. This is because the designed CVNN process complex-valued CSI directly without separating the complex numbers into real and imaginary parts. Therefore, the CVNN can better capture the relationship between the real and imaginary part of the complex-valued CSI compared to RVNN. 



\section{Conclusion}
In this paper, we have designed a novel indoor multi-user positioning system. We have formulated this indoor positioning problem as an optimization problem whose goal is to minimize the gap between the estimated position and the actual position. To solve this problem, we have proposed a CVNN-based FL algorithm that has two key advantages: 1) our proposed algorithm can directly process complex-valued CSI data without data transformation, and 2) our proposed algorithm is a distributed ML method that does not require users to send their CSI data to the server. Since the output of our proposed algorithm is complex-valued which consists of the real and imaginary parts, we can use it to implement two learning tasks. First, the proposed algorithm directly outputs the estimated positions users. Here, the real and imaginary parts of an output neuron represent the 2D coordinates of the user. Second, the proposed algorithm can output two CSI features. Simulation results have shown that the proposed CVNN-based FL algorithm yields significant improvements in the performance compared to a RVNN baseline which has to transform the complex-valued CSI data into real-valued data. \\

\section*{Appendix}
\subsection{Proof of Theorem \ref{th:converge}}\label{Ap:a}
To prove Theorem \ref{th:converge}, we first expand $J\left( \boldsymbol{g}_{t+1} \right)$ by using the second-order Taylor expansion, as follows: 
\begin{equation}\label{eq:taylor}
    \begin{split}
        J\left( \boldsymbol{g}_{t+1} \right) &= J\left( \boldsymbol{g}_t \right) + \left( \boldsymbol{g}_{t+1}-\boldsymbol{g}_t \right)^T \nabla J\left( \boldsymbol{g}_t \right) \\ 
        & \quad + \frac{1}{2} \left( \boldsymbol{g}_{t+1}-\boldsymbol{g}_t \right)^T \nabla^2 J\left( \boldsymbol{g}_t \right) \left( \boldsymbol{g}_{t+1}-\boldsymbol{g}_t \right) \\
        & \leq J \left( \boldsymbol{g}_t \right) + \left( \boldsymbol{g}_{t+1}-\boldsymbol{g}_t \right)^T \nabla J\left( \boldsymbol{g}_t \right) \\
        & \quad + \frac{Z}{2} \lVert \boldsymbol{g}_{t+1}-\boldsymbol{g}_t \rVert^2, 
    \end{split}
\end{equation}
where the inequality stems from the fact that $\nabla^2 J\left( \boldsymbol{g}_t \right) = \lim_{\boldsymbol{g}_{t+1} \to \boldsymbol{g}_t} \frac{\nabla J\left( \boldsymbol{g}_{t+1} \right) - \nabla J\left( \boldsymbol{g}_t \right)}{\boldsymbol{g}_{t+1}-\boldsymbol{g}_t} \leq Z$ which can be derived from Assumption \ref{aspt:2}. In (\ref{eq:taylor}), since $r_u^t$ and $m_u^t$ are random variables, based on the update policy in (\ref{eq:update_glb}), the global FL model $\boldsymbol{g}_{t+1}$ is a random variable. Thus, the loss $J \left( \boldsymbol{g}_{t+1} \right)$ is a random variable. To this end, we calculate the expectation of $J \left( \boldsymbol{g}_{t+1} \right)$ with respect to $r_u^t$ and $m_u^t$. Given $h \left( \eta, t \right) = \frac{1}{Z}$ and (\ref{eq:update}), the expectation of $J\left( \boldsymbol{g}_{t+1} \right)$ with respect to $r_u^t$ and $m_u^t$ is 
\begin{equation}\label{prf:j_1}
    \begin{split}
        \mathbb{E} \left( J(\boldsymbol{g}_{t+1}) \right) &\leq \mathbb{E} \left[ J \left( \boldsymbol{g}_t \right) - \frac{1}{Z} \left( \nabla J \left( \boldsymbol{g}_t \right) - \boldsymbol{o} \right)^T \nabla J \left( \boldsymbol{g}_t \right) \right. \\
        &\quad \left. + \frac{Z}{2} \frac{1}{Z^2} \lVert \nabla J \left( \boldsymbol{g}_t \right) - \boldsymbol{o} \rVert^2 \right] \\
        &= \mathbb{E} \left[ J \left( \boldsymbol{g}_t \right) \right] - \left( \frac{\lVert \nabla J \left( \boldsymbol{g}_t \right) \rVert^2}{Z} - \frac{\boldsymbol{o}^T \nabla J \left( \boldsymbol{g}_t \right)}{Z} \right) \\
        & \quad + \frac{1}{2Z} \mathbb{E} \left[ \lVert \nabla J \left( \boldsymbol{g}_t \right) \rVert^2 + \lVert \boldsymbol{o} \rVert^2 - 2 \boldsymbol{o}^T \nabla J \left( \boldsymbol{g}_t \right) \right] \\
        & = \mathbb{E} \left[ J \left( \boldsymbol{g}_t \right) \right] - \frac{1}{2Z} \lVert \nabla J \left( \boldsymbol{g}_t  \right) \rVert^2 + \frac{1}{2Z} \mathbb{E} \left[ \lVert \boldsymbol{o} \rVert^2 \right],  
    \end{split}
\end{equation}
where $\boldsymbol{o} = \nabla J \left( \boldsymbol{g}_t \right) - \frac{\sum_{u=1}^U r_u^t \mathfrak{R}\left(\nabla J_u \left(\boldsymbol{g}_t\right)\right)}{\sum_{u=1}^U |\mathcal{B}_u^t| r_u^t} - \frac{\sum_{u=1}^U m_u^t \mathfrak{I}\left(\nabla J_u \left(\boldsymbol{g}_t\right)\right)}{\sum_{u=1}^U |\mathcal{B}_u^t| m_u^t} $. To prove the convergence of our proposed method, we need to prove that the difference between the loss of the model $\boldsymbol{g}_t$ and the loss of the optimal model $\overline{\boldsymbol{g}}$, i.e., $J ( \boldsymbol{g}_t ) - J ( \overline{\boldsymbol{g}} )$, has an upper bound. To this end, we next simplify (\ref{prf:j_1}) by simplifying $\mathbb{E} \left[ \lVert \boldsymbol{o} \rVert^2 \right]$. Given $\boldsymbol{o} = \mathfrak{R} \left( \boldsymbol{o} \right) + i \mathfrak{I} \left( \boldsymbol{o} \right)$, we have $\mathbb{E} \left[ \lVert \boldsymbol{o} \rVert^2 \right]= \mathbb{E} \left[ \lVert \mathfrak{R} \left( \boldsymbol{o} \right) \rVert^2 \right] + \mathbb{E} \left[ \lVert \mathfrak{I} \left( \boldsymbol{o} \right) \rVert^2 \right]$. $\mathbb{E} \left[ \lVert \mathfrak{R} \left( \boldsymbol{o} \right) \rVert^2 \right]$ can be rewritten as follows 
\begin{equation}
    \begin{split}
        &\mathbb{E} \left[ \lVert \mathfrak{R} \left( \boldsymbol{o} \right) \rVert^2 \right] \\
        &= \mathbb{E} \left[ \lVert \mathfrak{R} \left( \nabla J \left( \boldsymbol{g}_t \right) \right) -\frac{\sum_{u=1}^U r_u^t \mathfrak{R}\left(\nabla J_u \left(\boldsymbol{g}_t\right)\right)}{\sum_{u=1}^U |\mathcal{B}_u^t| r_u^t} \rVert^2 \right].
\end{split}
\end{equation}
Since $\mathfrak{R} \left( \nabla J \left( \boldsymbol{g}_t \right) \right) = \frac{1}{N} \sum_{u=1}^U \mathfrak{R} \left( \nabla J_u \left( \boldsymbol{g}_t \right) \right)$, we have
\begin{equation}
    \begin{split}
     &\mathbb{E} \left[ \lVert \mathfrak{R} \left( \boldsymbol{o} \right) \rVert^2 \right]\\
        &= \mathbb{E} \left[ \lVert \frac{\sum_{u=1}^U \mathfrak{R} \left( \nabla J_u \left( \boldsymbol{g}_t \right) \right)}{N} - \frac{\sum_{u=1}^U r_u^t \mathfrak{R}\left(\nabla J_u \left(\boldsymbol{g}_t\right)\right)}{\sum_{u=1}^U |\mathcal{B}_u^t| r_u^t} \rVert^2 \right] \\ 
        &= \mathbb{E} \left[ \lVert - \frac{\left( N - \sum_{u=1}^U |\mathcal{B}_u^t| r_u^t\right) \sum_{u \in \mathcal{R}_1^t} \mathfrak{R} \left( \nabla J_u \left( \boldsymbol{g}_t \right) \right)}{N \sum_{u=1}^U |\mathcal{B}_u^t| r_u^t} \right. \\
        & \quad \left. + \frac{\sum_{u \in \mathcal{R}_2^t} \mathfrak{R} \left( \nabla J_u \left( \boldsymbol{g}_t \right) \right)}{N} \rVert^2 \right].\\ 
        & = \mathbb{E} \left[ \lVert \frac{ \left( N - \sum_{u=1}^U |\mathcal{B}_u^t| r_u^t\right) \sum_{u \in \mathcal{R}_1^t} \mathfrak{R} \left( \nabla J_u \left( \boldsymbol{g}_t \right) \right)}{N \sum_{u=1}^U |\mathcal{B}_u^t| r_u^t} \rVert^2 \right. \\
        & \quad \left. + \lVert \frac{\sum_{u \in \mathcal{R}_2^t} \mathfrak{R} \left( \nabla J_u \left( \boldsymbol{g}_t \right) \right)}{N} \rVert^2 - \frac{2\left( N - \sum_{u=1}^U |\mathcal{B}_u^t| r_u^t \right)}{N^2 \sum_{u=1}^U |\mathcal{B}_u^t| r_u^t} \right. \\
        & \quad \left. \langle \sum_{u \in \mathcal{R}_1^t} \mathfrak{R} \left( \nabla J_u \left( \boldsymbol{g}_t \right) \right), \sum_{u \in \mathcal{R}_2^t} \mathfrak{R} \left( \nabla J_u \left( \boldsymbol{g}_t \right) \right) \rangle \right].
\end{split}
\end{equation}
where $\mathcal{R}_1^t = \{u \in \mathcal{U} | r_u^t = 1 \}$ is the set of users that transmit the real parts of their local FL models to the server at iteration $t$, and $\mathcal{R}_2^t = \{ u \in \mathcal{U} | u \notin \mathcal{R}_1 \}$ is the set of users that do not transmit the real parts of their local FL models to the server. Since $\frac{2\left( N - \sum_{u=1}^U |\mathcal{B}_u^t| r_u^t \right)}{N^2 \sum_{u=1}^U |\mathcal{B}_u^t| r_u^t} \geq 0$ and $\langle \sum_{u \in \mathcal{R}_1^t} \mathfrak{R} \left( \nabla J_u \left( \boldsymbol{g}_t \right) \right), \sum_{u \in \mathcal{R}_2^t} \mathfrak{R} \left( \nabla J_u \left( \boldsymbol{g}_t \right) \right) \rangle \leq \left| \langle \sum_{u \in \mathcal{R}_1^t} \mathfrak{R} \left( \nabla J_u \left( \boldsymbol{g}_t \right) \right), \sum_{u \in \mathcal{R}_2^t} \mathfrak{R} \left( \nabla J_u \left( \boldsymbol{g}_t \right) \right) \rangle \right|$, we have 
\begin{equation}
    \begin{split}\label{prf:re_o_1}
        &\mathbb{E} \left[ \lVert \mathfrak{R} \left( \boldsymbol{o} \right) \rVert^2 \right]\\
        &\leq \mathbb{E} \left[ \frac{ \left( N - \sum_{u=1}^U |\mathcal{B}_u^t| r_u^t\right) \sum_{u \in \mathcal{R}_1^t} \lVert \mathfrak{R} \left( \nabla J_u \left( \boldsymbol{g}_t \right) \right) \rVert^2}{N \sum_{u=1}^U |\mathcal{B}_u^t| r_u^t} \right. \\
        & \quad \left. + \frac{\sum_{u \in \mathcal{R}_2^t} \lVert \mathfrak{R} \left( \nabla J_u \left( \boldsymbol{g}_t \right) \right) \rVert^2}{N} + \frac{2\left( N - \sum_{u=1}^U |\mathcal{B}_u^t| r_u^t \right)}{N^2 \sum_{u=1}^U |\mathcal{B}_u^t| r_u^t} \right. \\
        & \quad \left. \left| \langle \sum_{u \in \mathcal{R}_1^t} \mathfrak{R} \left( \nabla J_u \left( \boldsymbol{g}_t \right) \right), \sum_{u \in \mathcal{R}_2^t} \mathfrak{R} \left( \nabla J_u \left( \boldsymbol{g}_t \right) \right) \rangle \right| \right] \\
        &= \mathbb{E} \left[ \frac{\left( N - \sum_{u=1}^U |\mathcal{B}_u^t| r_u^t\right) \sum_{u \in \mathcal{R}_1^t} \lVert \mathfrak{R} \left( \nabla J_u \left( \boldsymbol{g}_t \right) \right) \rVert}{N \sum_{u=1}^U |\mathcal{B}_u^t| r_u^t} \right. \\
        & \quad \left. + \frac{\sum_{u \in \mathcal{R}_2^t} \lVert \mathfrak{R} \left( \nabla J_u \left( \boldsymbol{g}_t \right) \right) \rVert}{N} \right]^2.
    \end{split}
\end{equation}
 Next, we simplify (\ref{prf:re_o_1}) by simplifying $\sum_{u \in \mathcal{R}_1^t} \lVert \mathfrak{R} \left( \nabla J_u \left( \boldsymbol{g}_t \right) \right) \rVert$ and $\sum_{u \in \mathcal{R}_2^t} \lVert \mathfrak{R} \left( \nabla J_u \left( \boldsymbol{g}_t \right) \right) \rVert$. From (\ref{eq:bound1}), since $\mathfrak{R} \left( \nabla J_u \left( \boldsymbol{g}_t \right) \right) = \sum_{k=1}^{|\mathcal{B}_u^t|} \mathfrak{R} \left( \nabla J_u \left( \boldsymbol{g}_t, \boldsymbol{H}_{u,k}, \boldsymbol{p}_{u,k} \right) \right)$, we have $\lVert \mathfrak{R} \left( \nabla J_u \left( \boldsymbol{g}_t \right) \right) \rVert \leq |\mathcal{B}_u^t| \sqrt{\zeta_1 + \zeta_2 \lVert \nabla J \left( \boldsymbol{g}_t \right) \rVert^2}$. Hence, we have 
\begin{equation}\label{eq:r1}
        \sum_{u \in \mathcal{R}_1^t} \lVert \mathfrak{R} \left( \nabla J_u \left( \boldsymbol{g}_t \right) \right) \rVert \leq \sum_{u=1}^U |\mathcal{B}_u^t| r_u^t \sqrt{\zeta_1 + \zeta_2 \lVert \nabla J \left( \boldsymbol{g}_t \right) \rVert^2}, 
\end{equation}
and
\begin{equation}\label{eq:r2}
    \begin{split}
        \sum_{u \in \mathcal{R}_2^t} \lVert \mathfrak{R} &\left( \nabla J_u \left( \boldsymbol{g}_t \right) \right) \rVert \\
        &\leq \left( N-\sum_{u=1}^U |\mathcal{B}_u^t| r_u^t \right) \sqrt{\zeta_1 + \zeta_2 \lVert \nabla J \left( \boldsymbol{g}_t \right) \rVert^2}. 
    \end{split}
\end{equation}
Substituting (\ref{eq:r1}) and (\ref{eq:r2}) into (\ref{prf:re_o_1}), $\mathbb{E} \left[ \lVert \mathfrak{R} \left( \boldsymbol{o} \right) \rVert^2 \right]$ can be expressed by 
\begin{equation}
    \begin{split}
        \mathbb{E} &\left[ \lVert \mathfrak{R} \left( \boldsymbol{o} \right) \rVert^2 \right] \\
        & \leq \frac{4}{N^2} \mathbb{E} \left( N-\sum_{u=1}^U |\mathcal{B}_u^t| r_u^t \right)^2 \left( \zeta_1 + \zeta_2 \lVert \nabla J \left( \boldsymbol{g}_t \right) \rVert^2 \right). 
    \end{split}
\end{equation}
Since $N \geq N-\sum_{u=1}^U |\mathcal{B}_u^t| r_u^t \geq 0$, we have 
\begin{equation}\label{prf:re_o}
    \begin{split}
        \mathbb{E} &\left[ \lVert \mathfrak{R} \left( \boldsymbol{o} \right) \rVert^2 \right] \\
        & \leq \frac{4}{N} \mathbb{E} \left( N-\sum_{u=1}^U |\mathcal{B}_u^t| r_u^t \right) \left( \zeta_1 + \zeta_2 \lVert \nabla J \left( \boldsymbol{g}_t \right) \rVert^2 \right). 
    \end{split}
\end{equation}
Similarly, $\mathbb{E} \left[ \lVert \mathfrak{I} \left( \boldsymbol{o} \right) \rVert^2 \right]$ can be calculated using the same method that used to calculate $\mathbb{E} \left[ \lVert \mathfrak{R} \left( \boldsymbol{o} \right) \rVert^2 \right]$, as follows:  
\begin{equation}\label{prf:im_o}
    \begin{split}
        \mathbb{E} &\left[ \lVert \mathfrak{I} \left( \boldsymbol{o} \right) \rVert^2 \right] \\
        & \leq \frac{4}{N} \mathbb{E} \left( N-\sum_{u=1}^U |\mathcal{B}_u^t| m_u^t \right) \left( \zeta_1 + \zeta_2 \lVert \nabla J \left( \boldsymbol{g}_t \right) \rVert^2 \right). 
    \end{split}
\end{equation}
Based on (\ref{prf:re_o}) and (\ref{prf:im_o}), $\mathbb{E} \left[ \lVert \boldsymbol{o} \rVert^2 \right]$ can be expressed by     
\begin{equation}\label{eq:o}
    \begin{split}
        \mathbb{E} \left[ \lVert \boldsymbol{o} \rVert^2 \right] \leq &\frac{4}{N} \left[ 2N - \mathbb{E} \left( \sum_{u=1}^U |\mathcal{B}_u^t| r_u^t \right) \right. \\
        & \left. + \mathbb{E} \left( \sum_{u=1}^U |\mathcal{B}_u^t| m_u^t \right) \right] \left( \zeta_1 + \zeta_2 \lVert \nabla J \left( \boldsymbol{g}_t \right) \rVert^2 \right). 
    \end{split}
\end{equation}
Substituting (\ref{eq:o}) into (\ref{prf:j_1}), we have
\begin{equation}\label{prf:j_2}
    \begin{split}
        \mathbb{E} &\left( J(\boldsymbol{g}_{t+1}) \right) \\
        & \leq \mathbb{E} \left( J(\boldsymbol{g}_{t}) \right) + \frac{2 \zeta_1 E}{ZN} - \frac{1}{2Z} \left( 1- \frac{4 \zeta_2 E}{N} \right) \lVert \nabla J \left( \boldsymbol{g}_t \right) \rVert^2, 
    \end{split}
\end{equation}
where $E = 2N - \mathbb{E} \left( \sum_{u=1}^U |\mathcal{B}_u^t| r_u^t + \sum_{u=1}^U |\mathcal{B}_u^t| m_u^t \right)$. To show that $J(\boldsymbol{g}_{t}) - J(\overline{\boldsymbol{g}})$ has an upper bound, we subtract $\mathbb{E} \left( J(\overline{\boldsymbol{g}}) \right)$ in both sides of (\ref{prf:j_2}), as follows:  
\begin{equation}\label{prf:j-j*}
    \begin{split}
        \mathbb{E} \left( J(\boldsymbol{g}_{t+1}) - J(\overline{\boldsymbol{g}}) \right) \leq &\mathbb{E} \left( J(\boldsymbol{g}_{t}) - J(\overline{\boldsymbol{g}}) \right) + \frac{2 \zeta_1 E}{ZN} \\
        & - \frac{1}{2Z} \left( 1- \frac{4 \zeta_2 E}{N} \right) \lVert \nabla J \left( \boldsymbol{g}_t \right) \rVert^2.  
    \end{split}
\end{equation}
Then, we simplify (\ref{prf:j-j*}) by simplifing $\lVert \nabla J \left( \boldsymbol{g}_t \right) \rVert^2$. From (\ref{eq:convex}), we have \cite{boyd2004convex}
\begin{equation}\label{eq:convex_1}
    \lVert \nabla J \left( \boldsymbol{g}_t \right) \rVert^2 \geq 2 \mu \left( J(\boldsymbol{g}_{t}) - J(\overline{\boldsymbol{g}}) \right). 
\end{equation}
Substituting (\ref{eq:convex_1}) into (\ref{prf:j-j*}), we have 
\begin{equation}\label{prf:j-j*_1}
    \mathbb{E} \left( J(\boldsymbol{g}_{t+1}) - J(\overline{\boldsymbol{g}}) \right) \leq A \mathbb{E} \left( J(\boldsymbol{g}_{t}) - J(\overline{\boldsymbol{g}}) \right) + \frac{2 \zeta_1 E}{ZN}, 
\end{equation}
where $A = 1-\frac{\mu}{Z}+\frac{4\mu\zeta_2E}{NZ}$. Apply (\ref{prf:j-j*_1}) recursively, we have 
\begin{equation}
    \begin{split}
        \mathbb{E} &\left( J(\boldsymbol{g}_{t+1}) - J(\overline{\boldsymbol{g}}) \right) \\
        & \leq A^{t}  \mathbb{E} \left( J(\boldsymbol{g}_{1}) - J(\overline{\boldsymbol{g}}) \right) + \sum_{a=1}^{t} A^{a} \frac{2 \zeta_1 E}{ZN} \\
        & = A^{t}  \mathbb{E} \left( J(\boldsymbol{g}_{1}) - J(\overline{\boldsymbol{g}}) \right) + \left( \frac{1-A^{t}}{1-A} \right) \frac{2 \zeta_1 E}{ZN}. 
    \end{split}
\end{equation}
This completes the proof.

\bibliographystyle{IEEEbib}
\bibliography{references1}
\end{document}